\documentclass[aps,pra,superscriptaddress,12pt,longbibliography]{revtex4-2}

%%%
% Usual (decimal) numbering for sections

% Fix references
\makeatletter
\renewcommand{\p@subsection}{}
\renewcommand{\p@subsubsection}{}
\makeatother
%%%

\usepackage{amsmath,amsfonts,amssymb,amsthm}
\usepackage{mathtools}
\usepackage{setspace}
\usepackage[normalem]{ulem} % strikeout
\usepackage{stmaryrd}       % knuth
\usepackage{dsfont}         % font for \one
\usepackage{expdlist}
\usepackage{physics}
\usepackage{verbatim}
\usepackage{xspace}
\usepackage{subfigure}
\usepackage{tikz} 
%\usepackage{subcaption}

%%% To push figures to the end:
%\usepackage[nolists,heads,nomarkers]{endfloat}
%\providecommand{\onecolumn}{} % If not defined, get an "undefined" error with 'endfloat'.
%\renewcommand{\efloatseparator}{\mbox{}}
%%%

\usepackage{graphicx,xcolor}
\usepackage{hyperref}
\hypersetup{pdfstartview=}
\usepackage{url}

\usepackage{verbatim}
\usepackage{alltt}
\usepackage{moreverb}

\usepackage{xr} 
  % For external references. Use: \externaldocument[prefix]{filename}

% To use algorithms environment uncomment:
%\usepackage[ruled,lined]{algorithm2e}
%\SetAlgorithmName{Protocol}{List of Protocols}

\newtheorem{fact}{Fact}[section]
\newcommand{\rvline}{\hspace*{-\arraycolsep}\vline\hspace*{-\arraycolsep}}

\newtheorem{remark}{Remark}[section]
\newtheorem{example}{Example}[section]

\newtheorem{casse}{Case}

\newtheorem{prop}{Proposition}[section]

\setlength{\unitlength}{1in}
\providecommand{\ignore}[1]{}

\newif\ifcmnt
%  Use \cmntfalse to not see comments when it is latex'ed
%\cmntfalse
%  Use \cmnttrue to see the comments
\cmnttrue
% Enables turning them off from outside.
\ifdefined\cmntsoff\cmntfalse\fi

% Manny's compilation commands:
% With comments:
% rubber --pdf lpaper;
% Without comments:
% rubber --pdf lpaper_nc; rubber --pdf lpaper_nc; 

\ifcmnt

\else

\fi

% For permanent comments:

% Local commands.

 % Font for sequences.
 % Font for parties and physical systems.

\newcommand{\cC}{\mathcal{C}}
\newcommand{\cD}{\mathcal{D}}

\newcommand{\cH}{\mathcal{H}}

\newcommand{\cP}{\mathcal{P}}

%\def\one{{\mathchoice {\rm 1\mskip-4mu l} {\rm 1\mskip-4mu l} {\rm
%1\mskip-4.5mu l} {\rm 1\mskip-5mu l}}}

% Theorems etc.
\numberwithin{equation}{section}
\newtheorem{theorem}{Theorem}[section]
\newtheorem*{theorem*}{Theorem}
\newtheorem{problem}{Problem}[section]
\newtheorem*{problem*}{Problem}
\newtheorem{lemma}[theorem]{Lemma}
\newtheorem*{lemma*}{Lemma}
\newtheorem{corollary}[theorem]{Corollary}
\newtheorem{definition}[theorem]{Definition}

% Using phantom, vphantom, hphantom, smash, rlap, llap, clap:
% smash: zero height, depth, same width, keep ink as originally aligned.
% rlap/llap/clap: keep height,depth, zero width, text to right/left/centered.
%   In mathmode, use mathllap, mathrlap and mathclap.
% phantom: eliminate ink, keep box.
% vphantom/hphantom: eliminate ink, zero width/zero height and depth.
% I.e. rlap has the left-edge of the text at the point.
%      llap has the right-edge of the text at the point.

\begin{document}

\title{A Result About the Classification of Quantum Covariance Matrices Based on Their Eigenspectra}

\author{Arik Avagyan}
\email[]{arav2492@colorado.edu}
\affiliation{Department of Physics, University of Colorado, Boulder, Colorado, 80309, USA}

\begin{abstract} % Make sure it is one long line for the arXiv.

The set of covariance matrices of a continuous-variable quantum system with a finite number of degrees of freedom is a strict subset of the set of real positive-definite matrices due to Heisenberg's uncertainty principle. This has the implication that, in general, not every orthogonal transform of a quantum covariance matrix produces a positive-definite matrix that obeys the uncertainty principle. A natural question thus arises, to find the set of quantum covariance matrices consistent with a given eigenspectrum. For the special class of pure Gaussian states the set of quantum covariance matrices with a given eigenspectrum consists of a single orbit of the action of the orthogonal symplectic group. The eigenspectrum of a covariance matrix of a state in this class is composed of pairs that each multiply to one. Our main contribution is finding a non-trivial class of eigenspectra with the property that the set of quantum covariance matrices corresponding to any eigenspectrum in this class are related by orthogonal symplectic transformations. We show that all non-degenerate eigenspectra with this property must belong to this class, and that the set of such eigenspectra coincides with the class of non-degenerate eigenspectra that identify the physically relevant thermal and squeezing parameters of a Gaussian state.

\end{abstract}

\maketitle

\section{Overview}\label{sec overview}

A quantum system is a continuous-variable system when it is associated with an infinite-dimensional separable Hilbert space and has observables with continuous eigenspectra. An important area of physics that concerns itself with the study of such systems is quantum optics. There, the quantized electromagnetic field is usually described by a finite collection of one-dimensional harmonic oscillators, or modes. The covariance matrix (CM) of such a system captures the second moment of the canonical position and momentum operators of the modes. For a classical system described by a finite set of position and momentum variables, any real positive-definite matrix (PDM) is a valid CM of a probability distribution on phase-space. In quantum physics, however, Heisenberg's uncertainty principle sets a fundamental constraint on the size of the second moment. For example, for a system occupying a single mode, where the diagonal elements of the CM $\Gamma$ are $2( \langle \hat{q}^2 \rangle-\langle \hat{q} \rangle^2)$ and $2( \langle \hat{p}^2 \rangle-\langle \hat{p} \rangle^2)$, and each off-diagonal element is  $ \langle \hat{q}\hat{p}+\hat{p}\hat{q} \rangle-   2 \langle \hat{q} \rangle \langle \hat{p} \rangle$, the uncertainty principle can be stated as $\det(\Gamma) \geq 1$ in appropriate units.

The literature studying the various properties of quantum CMs is very rich. For one, quantum CMs are intrinsically interesting as they can be used to express such a fundamental tenet of quantum physics as the uncertainty principle. For example, Refs.~\cite{simon1994quantum,arvind1995real} present several equivalent formulations of the uncertainty-principle restrictions on the CM. Central to these formulations is Williamson's theorem, which states that every real PDM can be brought to a diagonal form by a symplectic transformation \cite{williamson1936on}. The diagonal entries of the resulting diagonal matrix, usually called the symplectic eigenvalues, are positive and appear in pairs. In the case of quantum CMs, the uncertainty principle sets a lower bound on the magnitudes of their symplectic eigenvalues. In the units adopted in this paper this bound is equal to $1$. It should be mentioned that the symplectic decomposition of real PDMs, in part motivated by its relevance to quantum physics, is an active field of study in mathematics, see, for example, Refs.~\cite{bhatia2015on, son2021computing,jain2022deriv}. Ref.~\cite{deGosson2001symplectic} is another example of the investigation of the relationship between the uncertainty principle and quantum CMs. There, the quantum CMs are represented as ellipsoids in phase space, where the uncertainty principle sets constraints on the sizes and shapes of these ellipsoids. The existence of these constraints is closely related to Gromov's non-squeezing theorem, a fundamental result in symplectic geometry \cite{gromov1985pseudo}.

The interest in quantum CMs increased with the advances in quantum information theory. In particular, quantum CMs are central to the field of Gaussian quantum information processing. A Gaussian state is completely characterized by its first moment and its CM, and quantum channels that preserve the Gaussianity of the state (called Gaussian channels or processes) are fully characterized by how they transform the first moment and the CM. For example, a unitary Gaussian process transforms the CM by a symplectic matrix, and thus preserves the symplectic eigenvalues. For Gaussian states, the other essential features of quantum systems besides the uncertainty principle, such as entanglement and separability, non-locality, as well as purity and entropy can be quantified based on particular properties of their CMs. Similarly, the CM is central to the evaluation of various Gaussian quantum information processing tasks, such as quantum communication, quantum cryptography, quantum computation, quantum teleportation and quantum state discrimination. \cite{weedbrook:qc2012a,serafini2017quantum}

The study presented here was also inspired by a particular application in quantum information processing. In a previous work \cite{avagyan2023multi} I and my coauthors showed that the total-photon-number distribution of a Gaussian state determines and is determined by the eigenspectrum of the CM and the absolute values of the first moment in each eigenspace. Thus, it might be possible to experimentally estimate these properties of an unknown Gaussian state using a device that counts the total photon-number, such as a transition-edge-sensor \cite{gerrits2016super}. In this context, it can be important to obtain estimates of more physically relevant properties of the Gaussian state from the measurements of the total-photon-number, such as bounds on the symplectic eigenvalues (usually called the thermal parameters in this field) of the CM, as well as bounds on the squeezing parameters of the CM. The latter are associated with the symplectic matrix in the symplectic decomposition of the CM, see Sec.~\ref{sec prob_form}.

Our main contribution in this paper is partially characterizing the class of eigenspectra of quantum CMs for which the thermal and squeezing parameters are exactly determined by the eigenspectrum (call it property $1$). This class includes the class of eigenspectra for which the set of CMs with the same eigenspectrum are related by orthogonal symplectic transformations (property $2$). It can be shown that one can construct a diagonal quantum CM from the eigenspectrum of any quantum CM, which we refer to as a ``diagonal representative". The thermal and squeezing parameters of a diagonal quantum CM are uniquely specified by its diagonal entries, or, more specifically, from the particular assignment of pairs of entries to the modes of the system. This implies that, for an eigenspectrum to possess property $1$, there must not exist another valid diagonal arrangement of the eigenspectrum that pairs the eigenvalues differently. We call this the ``general unique pairing condition". Part of our findings is that among the non-degenerate eigenspectra satisfying the general unique pairing condition, those that possess properties $1$ and $2$ are precisely the eigenspectra with diagonal representatives where at most one of the modes has a thermal parameter $\nu$ such that $\nu \geq 1$ in our units. For systems living in $S$ modes, we say the eigenspectra with the latter property satisfy the ``$(S-1)$-pure unique pairing condition". Our findings are not limited to non-degenerate eigenspectra. We narrow down the set of eigenspectra that need further study to finish the characterization of eigenspectra possessing either one or both of the properties $1$ and $2$.

Our main results are presented as a theorem, a proposition and a corollary of the proposition. Thm.~\ref{thm sub} shows that, given a non-degenerate eigenspectrum $\Lambda$ that satisfies the $(S-1)$-pure unique pairing condition, any two quantum CMs with $\Lambda$ are related by an orthogonal symplectic transformation. Such transformations preserve the thermal and squeezing parameters. Prop.~\ref{prop} shows that if the diagonal representative of an eigenspectrum $\Lambda$ has at least two thermal parameters $\nu_1$, $\nu_2$ such that $\nu_1>1$, $\nu_2>1$, and $\Lambda$ is non-degenerate in a particular subset of eigenvalues, then there  exists a quantum CM with $\Lambda$ that has a different set of thermal and squeezing parameters. Cor.~\ref{cor non-deg-class} shows that if an eigenspectrum is non-degenerate and possesses either property $1$ or property $2$, then it possesses both properties and satisfies the $(S-1)$-pure unique pairing condition.

The paper is organized as follows. Sec.~\ref{sec prob_form} is devoted to problem formulation. Sec.~\ref{sec ortho_prop} introduces a set of facts about orthogonal matrices. These facts are used in the proof of Thm.~\ref{thm sub} to construct a ``witness" for the violation of the uncertainty principle when an ONS transformation is applied to a diagonal representative of an eigenspectrum that satisfies the $(S-1)$-pure unique pairing condition. Sec.~\ref{sec graph_eig} uses graph theory to precisely define and explain the $(S-1)$-pure unique pairing condition. Thm.~\ref{thm sub}, Prop.~\ref{prop} and Cor.~\ref{cor non-deg-class} are presented in Sec.~\ref{sec main_result}. We close with a brief discussion of the implications of our results and proposals for future research in Sec.~\ref{sec disc}.

\section{Problem Formulation}\label{sec prob_form}

The background material
for this section can be found in a textbook such as Ref.~\cite{serafini2017quantum}. 
Consider a continuous variable system composed of $S$ modes characterized by canonical position and momentum operators $\hat{q}_i$, $\hat{p}_i$ for $i=1,\hdots,S$. We choose our units such that these operators satisfy the commutation relations $ [ \hat{q}_i,\hat{q}_j  ]=0 $, $ [ \hat{p}_i,\hat{p}_j  ]=0 $, and $[ \hat{q}_i,\hat{p}_j  ] = i \delta_{ij} $. These commutation relations can be expressed more compactly using the anti-symmetric $2$-form $\Omega = \bigoplus_{i=1}^S \begin{pmatrix} 0 & 1 \\ -1 & 0 \end{pmatrix}$. In particular, $ [\hat{r}_i,\hat{r}_j ] = i \Omega_{ij}$, where $\hat{r}_{2k-1}  =\hat{q}_{k}$ and $\hat{r}_{2k}  =\hat{p}_{k}$ for $k=1,\hdots,S$. The covariance matrix $\Gamma$ of the physical system is defined as $\Gamma_{ij} = \langle
\hat{r}_i \hat{r}_j + \hat{r}_j \hat{r}_i \rangle - 2 \langle
\hat{r}_i \rangle \langle \hat{r}_j \rangle$. The notation $\langle \ldots \rangle$ denotes the expectation with respect to the state of the system. We also introduce the phase-space coordinates $(q_1,p_1,\hdots,q_S,p_S)
$. We refer to each pair $(q_i,p_i)$ as conjugate coordinates. The basis vectors in this coordinate system are denoted by $\vec{e}_i$ for $i = 1,\hdots,2S$, e.g. $\vec{e}_1 = (1,0,\hdots,0)^T$, $\vec{e}_2 = (0,1,0,\hdots,0)^T$, and so on. In these coordinates the restriction imposed on $\Gamma$ by Heisenberg's uncertainty principle can be expressed as
\begin{align}\label{eq: uncer_prin}
\Gamma + i\Omega \geq 0.
\end{align}
We refer to the group of orthogonal symplectic matrices by $\mathrm{SpO}(2S,\mathbb{R})$. 

Using Williamson's theorem, every PDM, and, in particular, every quantum CM $\Gamma$ can be decomposed as $\Gamma = A^TTA$, where $A$ is symplectic, and $T$ is diagonal and consisting of $S$ blocks of the form $\begin{pmatrix}\nu_{i} & 0 \\
  0 & \nu_{i}\end{pmatrix}$. With our conventions, Eq.~\ref{eq: uncer_prin} implies that each $\nu_i$ must satisfy $\nu_{i}\geq 1$ for $\Gamma $ to be a quantum CM. $A$ satisfies $A \Omega A^T = \Omega$, and can be further decomposed as $A=LQK$, where $Q$ is diagonal and consisting of $S$ blocks of the form $\begin{pmatrix} e^{r_i} & 0 \\ 0 & e^{-r_i}  \end{pmatrix}$, and $K,L \in \mathrm{SpO}(2S,\mathbb{R})$ . The $\nu_i$ are the symplectic eigenvalues or the thermal parameters of $\Gamma$, while the $\abs{r_i}$ are referred to as the squeezing parameters of $\Gamma$. We reserve referring to the $\nu_i$ as thermal parameters when $\nu_i \geq 1$, so that they can be associated with a quantum CM. 
  
 We now introduce the notions of non-trivial and trivial ONS transformations.  
\begin{definition}\label{define non-trivial ONS}
Consider a quantum CM $\Gamma$ and an ONS matrix $O$. We say $O^T \Gamma O$ is a non-trivial ONS transformation of $\Gamma$ if $O$ cannot be written as $O=O'W$, where $W \in \mathrm{SpO}(2S,\mathbb{R})$ and $O'$ commutes with $\Gamma$. Conversely, we say $O^T \Gamma O$ is a trivial ONS transformation of $\Gamma$ if $O^T \Gamma O = W^T \Gamma W$ for some $W \in \mathrm{SpO}(2S,\mathbb{R})$.
\end{definition} 
In this paper we mainly focus on quantum CMs that have non-degenerate eigenspectra. For such matrices every ONS transformation is non-trivial since the diagonal matrix in the eigendecomposition has no duplicate diagonal entries. 

Since a general quantum CM $\Gamma$ has the decomposition $\Gamma = K^TQL^TTLQK$, any transformation $W^T\Gamma W$, where $W \in \mathrm{SpO}(2S,\mathbb{R})$, does not affect the thermal and squeezing parameters. However, a non-trivial ONS transformation usually results in a PDM with a different set of symplectic eigenvalues and squeezing parameters. If the new PDM has a symplectic eigenvalue $\nu$ such that $\nu <1$, then the PDM is not a quantum CM. For example, for the important class of CMs of pure Gaussian states, for which $T$ is the identity, any non-trivial ONS transformation results in a PDM that is not a quantum CM. The partial proof of this claim (for non-degenerate eigenspectra) is part of the proof of Thm.~\ref{thm sub}, but we provide an alternative proof that covers all such eigenspectra in App.~\ref{app}. The latter uses a more direct and a much simpler approach that might help the reader in forming a wider perspective on the arguments in Sec.~\ref{sec main_result}. More generally, there are many quantum CMs with different thermal and squeezing parameters that have the same eigenspectrum. These facts make the broad task of classifying quantum CMs based on their eigenspectra a challenging problem. In this paper we make significant progress in solving two related subproblems of this general problem. To formulate the problems, let $\Lambda = (\lambda_i)_{i=1}^{2S}$ denote the eigenspectrum of a quantum CM of a system living in $S$ modes.

\begin{problem}\label{problem 1}
Find the full set of conditions on $\Lambda$ for which the set of quantum CMs with this eigenspectrum consist of a single orbit under the action of $\mathrm{SpO}(2S,\mathbb{R})$.
\end{problem}
\begin{problem}\label{problem 2}
Find the full set of conditions on $\Lambda$ for which the set of quantum CMs with this eigenspectrum have the same thermal and squeezing parameters.
\end{problem}
Let us denote the (so far unknown) class of eigenspectra that satisfy the conditions of Prob.~\ref{problem 1} (Prob.~\ref{problem 2}) by $\cP_1(S)$ ($\cP_2(S)$). Notice that $\cP_1(S) \subseteq \cP_2(S)$ as for any $\Lambda \in \cP_1(S)$ the corresponding set of CMs are related to each other by transformations in $\mathrm{SpO}(2S,\mathbb{R})$ that preserve the thermal and squeezing parameters. 

We need the following definitions and observations. Let us denote the symmetric group on $n$ letters by $\mathrm{Sym}(n)$. 
\begin{definition}\label{define cD_Lambda}
Consider an eigenspectrum $\Lambda=  (\lambda_i)_{i=1}^{2S}$ of a quantum CM $\Gamma$, where the $\lambda_i$ are in a non-ascending order. We refer to any matrix in the set \[ \cD_\Lambda = \left\{ P \bigoplus_{i=1}^S \begin{pmatrix} \lambda_{i} & 0 \\ 0 & \lambda_{2S+1-i} \end{pmatrix}  P^T\right \}_{P \in  \mathrm{SpO}(2S,\mathbb{R}) \cap \mathrm{Sym}(2S) } \] as a diagonal representative of $\Lambda$ or of $\Gamma$. 
\end{definition}
To shed light on the composition of $\cD_\Lambda$, note that each $D \in \cD_\Lambda$ is diagonal, the diagonal entries being a symplectic permutation of the sequence $(\lambda_1,\lambda_{2S},\lambda_{2},\lambda_{2S-1},\hdots, \lambda_{S},\lambda_{S+1} ) $. Symplectic permutations consist of all permutations that preserve the pairings between the elements at the locations $2i-1,2i$ for all $i = 1,\hdots, S$ in the sequence. By this we mean that if $\lambda_i$ is moved to the location $2j-1$ or $2j$ for some $j = 1,\hdots, S$, then $\lambda_{2S-i+1}$ is moved to the location $2j$ or $2j-1$, respectively. Importantly, given a quantum CM $\Gamma$, all of its diagonal representatives are valid quantum CMs. For a proof see, for example, Lem.~4.3 in Ref.~\cite{avagyan2023multi}. 
\begin{remark}\label{remark Lambda_in_P_1}
Consider a quantum CM $\Gamma$ with an eigenspectrum $\Lambda \in \cP_1(S)$. Then, for any $D \in \cD_\Lambda $ there exists a $K \in \mathrm{SpO}(2S,\mathbb{R})$ such that $\Gamma = K ^T D K$.
\end{remark}
Further, note that the symplectic decomposition of any $ D \in \cD_\Lambda$, and, indeed, of any diagonal quantum CM, is of the form $D= Q^2 T$. Therefore, the squeezing and thermal parameters of $ D $ are identified by the diagonal entries. We can be more specific.
\begin{fact}\label{remark u_and_r_det_diag}
Let $D = \mathrm{diag} [d_1,\hdots,d_{2S}]$ be a diagonal quantum CM. The set of thermal and squeezing parameters of $D$ are given by $\{\nu_i\}_{i=1}^S = \{\sqrt{d_{2i-1}d_{2i} } \}_{i=1}^S$ and $\{\abs{r_i} \}_{i=1}^S = \{ \ln (d_{2i}/d_{2i-1})/4 \}_{i=1}^S$. If $D \in \cD_\Lambda$ for some eigenspectrum $\Lambda$, then $\{\nu_i\}_{i=1}^S= \{\sqrt{\lambda_i \lambda_{2S+1-i}}  \}_{i=1}^S$ and $\{ \abs{r_i}\}_{i=1}^S =\{ \ln (\lambda_i/\lambda_{2S+1-i})/4 \}_{i=1}^S$.
\end{fact}
The first part of Fact~\ref{remark u_and_r_det_diag} follows from the definitions of the $\nu_i$ and the $\abs{r_i} $. The second part implies that all matrices in $ \cD_\Lambda$ have the same set of thermal and squeezing parameters. This follows from the construction of $\cD_\Lambda$, namely, from the fact that the matrices in $ \cD_\Lambda$ are related by symplectic permutations that do not affect the thermal and squeezing parameters.

The final observation is a particular consequence of Eq.~\ref{eq: uncer_prin} that we use in the construction of the uncertainty-principle-violating witness in the proof of Thm.~\ref{thm sub}. Let us denote the diagonal consecutive $2 \times 2$ blocks of a matrix $A$ by $A^{(i)}$. More specifically, $A^{(i)} = \begin{pmatrix} A_{2i-1,2i-1} & A_{2i-1,2i}  \\ A_{2i,2i-1} & A_{2i,2i} \end{pmatrix}$.
\begin{fact}\label{fact det_Gamma_i}
Consider a quantum CM $\Gamma$. Then, $\det (\Gamma^{(i)}) \geq 1$ for all $i = 1,\hdots, S$.
\end{fact}
To see this, first note that $\Gamma + i\Omega \geq 0$ implies $\Gamma^{(i)} +i\Omega_1 \geq 0$ for every $i = 1,\hdots, S$, where $\Omega_1 = \begin{pmatrix} 0 & 1 \\ -1 & 0 \end{pmatrix}$. This in turn implies that $\det (\Gamma^{(i)} +i\Omega_1) \geq 0$. The fact statement then follows since the $\Gamma^{(i)}$ are symmetric. To be more specific, 
\begin{align}
&  \det \begin{pmatrix} \Gamma_{2i-1,2i-1}      & \Gamma_{2i-1,2i} + i \\   \Gamma_{2i-1,2i} - i & \Gamma_{2i,2i} \end{pmatrix} \geq 0  \implies    \Gamma_{2i-1,2i-1}     \Gamma_{2i,2i}  - (\Gamma_{2i-1,2i} + i ) (\Gamma_{2i-1,2i} - i ) \geq 0  \nonumber \\
&\implies  \Gamma_{2i-1,2i-1}     \Gamma_{2i,2i}  - \Gamma_{2i-1,2i} \Gamma_{2i-1,2i} -1 \geq 0  \implies \det \Gamma^{(i)} -1 \geq 0.
\end{align}
 
 \section{On Some Properties of Orthogonal Matrices}\label{sec ortho_prop}

Here we state or derive several properties of orthogonal matrices that we use in the proof of Thm.~\ref{thm sub}. We denote the orthogonal group in $n$ dimensions by $\mathrm{O}(n)$.

\begin{fact}\label{O_det_conv} 
Consider an arbitrary $O \in \mathrm{O}(n)$ and an arbitrary set of $p$ indices $i_1 <\hdots < i_p$, where $i_p \leq n$. Denote the submatrix of $O$ that is produced from the columns $i_1,\hdots,i_p$ and rows $j_1,\hdots,j_p$ of $O$ by $O_{j_1,\hdots,j_p}^{i_1,\hdots,i_p}$. Then,
\begin{equation}
\sum_{j_1 <\hdots < j_p} \left[ \det (O_{j_1,\hdots,j_p}^{i_1,\hdots,i_p}) \right]^2 =1
\end{equation}
\end{fact}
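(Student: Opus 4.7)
The plan is to recognize Fact~\ref{O_det_conv} as a direct application of the Cauchy--Binet formula to the rectangular submatrix formed by the chosen columns of $O$.

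First, let $M$ denote the $n \times p$ submatrix of $O$ whose columns are the columns indexed by $i_1 < \ldots < i_p$. Because $O$ is orthogonal, its columns form an orthonormal set in $\mathbb{R}^n$, and in particular the $p$ columns constituting $M$ are orthonormal. Hence $M^T M = I_p$, so $\det(M^T M) = 1$.

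Next, I would invoke the Cauchy--Binet formula for the product of the $p \times n$ matrix $M^T$ and the $n \times p$ matrix $M$:
\begin{equation*}
\det(M^T M) \;=\; \sum_{j_1 < \ldots < j_p} \det\!\bigl((M^T)^{j_1,\ldots,j_p}\bigr)\,\det\!\bigl(M_{j_1,\ldots,j_p}\bigr),
\end{equation*}
where the sum is over all $p$-element subsets of row indices. The $p \times p$ minor of $M$ obtained from rows $j_1,\ldots,j_p$ is precisely $O_{j_1,\ldots,j_p}^{i_1,\ldots,i_p}$ in the notation of the fact, while the corresponding minor of $M^T$ is its transpose and therefore has the same determinant. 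Thus each summand equals $[\det(O_{j_1,\ldots,j_p}^{i_1,\ldots,i_p})]^2$, and combining with $\det(M^T M) = 1$ yields the claim.

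There is no real obstacle here: the only thing that needs attention is bookkeeping of indices to confirm that the minors produced by Cauchy--Binet match those appearing in the statement of the fact, and that the squaring arises because the two minors in each Cauchy--Binet summand are transposes of each other. If a reader is not assumed to know Cauchy--Binet, I would alternatively give a one-line geometric argument: the rows of $M$ are the images of the standard basis vectors under projection onto the subspace spanned by the $i_k$-th coordinate axes composed with $O^T$, and $\sum_{j_1<\ldots<j_p}[\det(O_{j_1,\ldots,j_p}^{i_1,\ldots,i_p})]^2$ is the squared $p$-volume of the parallelepiped spanned by the $p$ orthonormal columns of $M$, which is $1$.
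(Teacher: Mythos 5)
Your proof is correct and follows exactly the route the paper takes: it cites Fact~\ref{O_det_conv} as a straightforward application of the Cauchy--Binet formula to the product of a matrix with orthonormal columns and its transpose, which is precisely your argument with $M^T M = I_p$. No gaps; the bookkeeping of minors and the observation that the two factors in each Cauchy--Binet summand are transposes of each other (hence the square) is all that is needed.
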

\begin{proof}
This fact is a straightforward application of the Cauchy-Binet formula \cite[Chap.~3]{taotopics} to the product of any matrix $A$ composed of mutually orthonormal columns with its transpose. More specifically, assume $A$ has dimensions $n \times m$, $n \geq m$, without loss of generality. Let $[n ]$ denote the set $\{1,\hdots,n\}$, and let $\binom{[n]}{m}$ denote the set of subsets of $[n]$ of size $m$. Also, for $x \in  \binom{[n]}{m}$ let $A_{x,[m]}$ denote the square matrix composed of the rows of $A$ at indices from $x$ and let $(A^T)_{[m],x}$ denote the square matrix composed of the columns of $A^T$ at indices from $x$. Note that $(A^T)_{[m],x} = (A_{x,[m]})^T$. Then, the Cauchy-Binet formula states
\begin{align}
 \det(A^TA) &= \sum_{x \in \binom{[n]}{m}}\det((A^T)_{[m],x})\det(A_{x,[m]}) \nonumber \\
& =\sum_{x \in \binom{[n]}{m}}\det(A_{x,[m]})^2,
\end{align}
where we have used a property of the determinant to equate $\det((A^T)_{[m],x}) = \det(A_{x,[m]})$ in the second line.
It is left to note that $A^TA = I$, hence $\det(A^TA) =1$, and that $A$ can be chosen to be the matrix composed of the columns $i_1,\hdots,i_p$ of $O$. 

\end{proof}
 This fact implies that the squares of the determinants of all square submatrices in any column set of $O$ can be thought of as a finite set of weights that sum to $1$. By transposition the same is true for any row set. We use the notation for submatrices in the fact statement throughout the paper. 

For the last several properties presented in this section we need the fact that $\mathrm{SpO}(2n,\mathbb{R})$ is isomorphic to a representation of the unitary group in $n$ dimensions, $\mathrm{U}(n)$ \cite{arvind1995real}. To better visualize this relationship, we order the coordinates as $(q_1,\hdots,q_n,p_1,\hdots,p_n)$ in this paragraph. Then, the matrix $W \in \mathrm{SpO}(2n,\mathbb{R})$ corresponding to the unitary $U \in \mathrm{U}(n)$ can be represented by
\begin{align}\label{U_in_SpO}
W &= \frac{1}{2} \begin{pmatrix} U+U^* & \rvline & i(U-U^*) \\ \hline  i(U^*-U) & \rvline &  U+U^* \end{pmatrix} \nonumber \\
& = \begin{pmatrix} \mathrm{Re}(U) & \rvline & - \mathrm{Im}(U) \\ \hline  \mathrm{Im}(U) & \rvline &  \mathrm{Re}(U)\end{pmatrix}
\end{align}
 Eq.~\ref{U_in_SpO} is obtained by conjugating $\begin{pmatrix} U & \rvline & 0 \\ \hline  0 & \rvline &  U^* \end{pmatrix}$ with the unitary $\frac{1}{\sqrt{2}} \begin{pmatrix} I_n & \rvline & i I_n \\ \hline  I_n & \rvline &  -i I_n \end{pmatrix}$, where $I_n$ is the $n \times n$ identity. In the rest of this section, as well as in the rest of the paper we stick to ordering the phase-space coordinates as $(q_1,p_1,\hdots,q_n,p_n)$. 

We introduce the maps $c_i[\cdot]$ and $r_i[\cdot]$, which, for any $n\times n$ matrix $A$, where $n$ is arbitrary, return its $i$'th column, $c_i[A] = (A_{1i},\hdots,A_{ni})^T$ and its $i$'th row, $r_i[A] = (A_{i1},\hdots,A_{in})$, respectively. The discussion in the previous paragraph leads to the following fact.
\begin{fact}\label{fact 2}
Given any $W \in \mathrm{SpO}(2n,\mathbb{R})$, for any $1 \leq i \leq n$, the columns $c_{2i-1}[W]$ and $c_{2i}[W]$, and the rows $r_{2i-1}[W]$ and $r_{2i}[W]$ can be expressed as
\begin{align}\label{col_OS_exp}
c_{2i-1}[W] & = (\mathrm{Re}(u_1),\mathrm{Im}(u_1),\hdots,\mathrm{Re}(u_n),\mathrm{Im}(u_n)) \nonumber \\
c_{2i}[W] & = (-\mathrm{Im}(u_1),\mathrm{Re}(u_1),\hdots,-\mathrm{Im}(u_n),\mathrm{Re}(u_n)) \nonumber \\
r_{2i-1}[W] & = (\mathrm{Re}(u_1), -\mathrm{Im}(u_1),\hdots,\mathrm{Re}(u_n),-\mathrm{Im}(u_n)) \nonumber \\
r_{2i}[W] & = (\mathrm{Im}(u_1),\mathrm{Re}(u_1),\hdots,\mathrm{Im}(u_n),\mathrm{Re}(u_n))
\end{align}
where the $u_i$ are complex numbers such that $\sum_{i=1}^n \abs{u_i}^2 = 1$. Conversely, given any column and row set of this form one can find a matrix in $  \mathrm{SpO}(2n,\mathbb{R})$ to which they belong. 
\end{fact}

\begin{lemma}\label{lem col12_O_rep}
Consider an arbitrary $O \in \mathrm{O}(2n)$. Then, for any $1 \leq i \leq n$ and $1 \leq j \leq 2n$, and for any real constant $\lambda $ there exists a matrix $W \in \mathrm{SpO}(2n,\mathbb{R})$ such that $(c_{2i-1}[OW]))_k = \lambda (c_{2i}[OW])_k$ for all $k \neq j$. Consequently, $\det( (OW)^{2i-1,2i}_{kl}) = 0$ for all $k<l$ where $k \neq j$, $l \neq j$.
\end{lemma}
\begin{proof}
$c_{2i-1}[OW]$ and $c_{2i}[OW]$ are equal to $Oc_{2i-1}[W]$ and $Oc_{2i}[W]$, respectively. Let us denote by $\tilde O$ the matrix produced by removing row $j$ from $O$. We want to show that one can find a matrix $W \in \mathrm{SpO}(2n,\mathbb{R}) $ such that $\tilde O c_{2i-1}[W] = \lambda \tilde O c_{2i}[W] $. This implies that $c_{2i-1}[W]-\lambda c_{2i}[W]$ lies in $\mathrm{ker} (\tilde O )$. $\mathrm{ker}(\tilde O)$ is one-dimensional since it is obtained by removing a row from a full rank square matrix, and, by the arbitrariness of $O$, it can be the span of any vector in $\mathbb{R}^{2n}$. Thus, we need to show that given any one-dimensional subspace of $\mathbb{R}^{2n}$ there exists a matrix $W \in \mathrm{SpO}(2n,\mathbb{R})$ such that $c_{2i-1}[W]-\lambda c_{2i}[W]$ spans this subspace. To do this, we substitute Eq.~\ref{col_OS_exp} to obtain
\begin{align}
& c_{2i-1}[W]-\lambda c_{2i}[W]  = \nonumber \\
& (\mathrm{Re}(u_1)+\lambda \mathrm{Im}(u_1), \mathrm{Im}(u_1)- \lambda \mathrm{Re}(u_1),\hdots,\mathrm{Re}(u_n)+\lambda \mathrm{Im}(u_n), \mathrm{Im}(u_n)- \lambda \mathrm{Re}(u_n)) \nonumber \\
& =  \begin{pmatrix} 1 & \lambda \\ - \lambda & 1  \end{pmatrix}^{\bigoplus n} c_{2i-1}[W].
\end{align}
It can be seen that $c_{2i-1}[W]-\lambda c_{2i}[W]$ is equal to an invertible matrix times $c_{2i-1}[W]$ for any value of $\lambda$. Since $c_{2i-1}[W]$ can be chosen to point in any direction in $\mathbb{R}^{2n}$, we can choose this direction such that $c_{2i-1}[W]-\lambda c_{2i}[W] $ lies in the desired subspace.

\end{proof}

\begin{lemma}\label{lem col12_O_rep_3}
Consider an arbitrary $O \in \mathrm{O}(2n)$. Then, for any $1 \leq i \leq n$, and for any $n-1$ rows $j_1,\hdots,j_{n-1}$ there exists a matrix $W \in \mathrm{SpO}(2n,\mathbb{R})$ such that these rows in $c_{2i-1}[OW]$ and $c_{2i}[OW]$ are zero. Subsequently, $\det( (OW)^{2i-1,2i}_{kl}) = 0$ for all $k<l$ where either $k \in \{j_1,\hdots,j_{n-1} \}$ or $l \in  \{ j_1,\hdots,j_{n-1}\}$.
\end{lemma}
\begin{proof}
As stated in the proof of the previous lemma $c_{2i-1}[OW]$ and $c_{2i}[OW]$ are equal to $Oc_{2i-1}[W]$ and $Oc_{2i}[W]$, respectively. The claim is that there exists a matrix $W \in \mathrm{SpO}(2n,\mathbb{R})$ such that $r_{l}[O] \cdot c_{2i-1}[W] =0$ and $r_{l}[O] \cdot c_{2i}[W] =0$ for all $l \in \{j_1,\hdots,j_{n-1} \}$. The second set of equalities are equivalent to
\begin{equation}
(-O_{l,2},O_{l,1},\hdots,-O_{l,2n},O_{l,2n-1})\cdot c_{2i-1}[W] =0.
\end{equation}
This can be seen from the relationship between $c_{2i}[W] $ and $c_{2i-1}[W] $ in Eq.~\ref{col_OS_exp}. Thus, the requirement on $W$ is that $c_{2i-1}[W] $ is in the kernel of a $(2n-2) \times 2n$ matrix which is obtained by stacking the $n-1$ rows $\{r_{l}[O] \}_{ l = j_1,\hdots,j_{n-1}  }$ vertically on top of the $n-1$ rows $\left \{(-O_{l,2},O_{l,1},\hdots,-O_{l,2n},O_{l,2n-1})\right \}_{ l = j_1,\hdots,j_{n-1}  }$. This is always possible since $c_{2i-1}[W]$ can be chosen to be any unit vector in $\mathbb{R}^{2n}$.
\end{proof}
The following statement is self-evident, but laying it out helps in conveying the significance of the definition that follows.
\begin{lemma}\label{lem OW=1_oplus_O'}
Consider an arbitrary $O \in \mathrm{O}(2n)$. There exists $W \in \mathrm{SpO}(2n,\mathbb{R})$ such that $OW = I_1 \oplus O'$ for some $O' \in \mathrm{O}(2n-1)$. 
\end{lemma}
\begin{proof}
We can choose $W$ such that $c_1[W] = r_1[O]$. Then, $OW$ must have the desired form due to the orthonormality of its columns and its rows.
\end{proof}
 
\begin{definition}\label{define max_right_red_trans}
Consider an ONS matrix $O \in \mathrm{O}(2S)$. For each $1 \leq k \leq S$, let $\cH_k(O) \subset  \mathrm{SpO}(2S,\mathbb{R})$ denote the set of all $W \in \mathrm{SpO}(2S,\mathbb{R})$ such that $OW = I_{2S-2k} \oplus O'$, where $O' \in  \mathrm{O}(2k)$, $O' \neq I_2 \oplus O''$ for some $O'' \in \mathrm{O}(2k-2)$. Denote $n = \min\{k   \mid \cH_k(O) \neq \emptyset \}$. Given any $W \in \cH_n(O)$, we call $OW$ a ``maximally mode-reduced transform of $O$", and we call $n$ the ``maximally reduced mode number of $O$". 
\end{definition}
Notice that, given a maximally mode-reduced transform of an ONS matrix $O$, $ I_{2S-2n} \oplus O'$, where $n$ is the maximally reduced mode number of $O$, we can always find $W' \in \mathrm{SpO}(2S,\mathbb{R})$ such that $OW' = I_{2S-2n+1} \oplus O''$, where $O'' \in  \mathrm{O}(2n-1)  \setminus (\mathrm{O}(1) \oplus \mathrm{O}(2n-2)) $. This is possible by Lem.~\ref{lem OW=1_oplus_O'}. 
\begin{lemma}\label{lem n_geq_2}
Consider an arbitrary ONS matrix $O \in \mathrm{O}(2S)$. Then, $\cH_1(O)$ is empty, and hence the maximally reduced mode number $n$ satisfies $n \geq 2$. 
\end{lemma}
\begin{proof}
The claim is that there does not exist $W \in \mathrm{SpO}(2n,\mathbb{R})$ such that $OW = I_{2n-2} \oplus O'$ for some $O' \in \mathrm{O}(2)$. 
Assume the opposite. Then, $OW \in \mathrm{SpO}(2n,\mathbb{R})$, and hence $ O^T = W ( I_{2n-2} \oplus (O')^T) \in \mathrm{SpO}(2n,\mathbb{R})$. This contradicts the assumption that $O$ is an ONS matrix.
\end{proof}
\begin{lemma}\label{lem det=1_impos}
Consider an arbitrary ONS matrix $O \in \mathrm{O}(2S)$. Let $I_{2S-2n} \oplus O'$ be a maximally mode-reduced transform of $O$, with the maximally reduced mode number $n$. Denote $i_n = S-n+1 $. Then, there does not exist $W \in \mathrm{SpO}(2n,\mathbb{R})$ such that 
\begin{align}\label{eq lem det=1_impos}
\left[ \det \left(  I_{2S-2n} \oplus O'W  \right)_{2i_n-1,2i_n}^{2i_n-1,2i_n} \right]^2 = 1.
\end{align}
\end{lemma}
\begin{proof}
Assume the opposite, and let $W \in \mathrm{SpO}(2n,\mathbb{R})$ satisfy Eq.~\ref{eq lem det=1_impos}. Denote $A =I_{2S-2n} \oplus O'W$. Then, it follows from Fact~\ref{O_det_conv} that $ \det A_{k,l}^{2i_n-1,2i_n} =0$, $ \det  A^{k,l}_{2i_n-1,2i_n} =0$ for all $(k,l) \neq (2i_n-1,2i_n)$. This is only possible if $A_{2i_n-1,l}=0, A_{2i_n,l} =0$ for all $l \neq 2i_n-1,2i_n$, and, similarly, if $A_{k,2i_n-1}=0, A_{k,2i_n} =0$ for all $k \neq 2i_n-1,2i_n$. But this means $A = I_{2S-2n} \oplus O_2 \oplus O_{2n-2}$ for some $O_2 \in \mathrm{O}(2)$ and $O_{2n-2} \in \mathrm{O}(2n-2)$. We can turn $A$ into $ I_{2S-2n-2} \oplus O_{2n-2}$ by right-multiplying with $I_{2S-2n} \oplus O^T_2 \oplus I_{2n-2} \in  \mathrm{SpO}(2S,\mathbb{R})$. This contradicts the assumption that $I_{2S-2n} \oplus O'$ is a maximally mode-reduced transform of $O$.

\end{proof}

\section{Graph-theoretic representation of eigenspectra }\label{sec graph_eig}
 
In this section we use tools from graph theory to describe certain properties of eigenspectra. This description simplifies the narration of the proof of Thm.~\ref{thm sub}, which is our main contribution. See, for example, Ref.~\cite{bondy1976graph} for the relevant background in graph theory.

 We associate with each eigenspectrum $\Lambda$ a graph $G(\Lambda)$. 
 \begin{definition}\label{define graph of eig}
Consider a non-degenerate eigenspectrum $\Lambda= (\lambda_i)_{i=1}^{2S}$, where the $\lambda_i$ are in a decreasing order. We refer to the graph with vertices
$\{\lambda_1,\ldots, \lambda_{2S}\}$ and edges $\{\{\lambda_i,\lambda_j\} \mid i<j, \lambda_{i} \lambda_{j}
\geq 1\}$ by $G(\Lambda)$. We say an edge $\{\lambda_i,\lambda_j\}$ of $G(\Lambda)$ is pure if $\lambda_{i}\lambda_{j}=1$. 
\end{definition} 
The most significant use of $G(\Lambda)$ is to capture the ``$(S-1)$-pure unique pairing condition" alluded to in Sec.~\ref{sec overview}.

 \begin{definition}\label{define t-pure}
Consider a non-degenerate eigenspectrum $\Lambda= (\lambda_i)_{i=1}^{2S}$, where the $\lambda_i$ are in a decreasing order. We say that $\Lambda$ satisfies the unique pairing condition if $G(\Lambda)$ has a unique perfect matching. For a nonnegative integer $t \neq S$, $\Lambda$ is said to satisfy the $t$-pure unique pairing condition if at least $t$ of the
edges of the perfect matching are pure.
\end{definition}

Thm.~\ref{thm sub} is about the non-degenerate $\Lambda$ satisfying the $(S-1)$-pure unique pairing condition, so we now aim to build an intuition about such $\Lambda$. First, we find the following definition useful.
\begin{definition}
Consider an eigenspectrum $\Lambda =  (\lambda_i)_{i=1}^{2S}$, where the $\lambda_i$ are in a non-ascending order. Consider a double-valued map $(n_1(k),n_2(k))$, $n_1(k) \neq n_2(k)$, assigning mutually distinct pairs of indices, with each index belonging to $\{1,\hdots,2S \}$, to every $k = 1,\hdots, S$. This map produces the following pairing of the eigenvalues: $\{  \{ \lambda_{n_1(k)},\lambda_{n_2(k)}  \} \}_{k=1}^S$. We say $\{  \{ \lambda_{n_1(k)},\lambda_{n_2(k)}  \} \}_{k=1}^S$ is a valid pairing of the $\lambda_i$ if $\lambda_{n_1(k)} \lambda_{n_2(k)} \geq 1$ for each $k = 1,\hdots,S$.
\end{definition}
The unique pairing condition for $\Lambda$, as the name implies, says that one can assign only one valid pairing to the $\lambda_i \in \Lambda$. This is captured by the requirement that a perfect matching of $G(\Lambda)$ exists and is unique. The $(S-1)$-pure unique pairing condition requires that there can exist one and only one pair $\{\lambda_i,\lambda_j\}$ in the valid pairing such that $ \lambda_{i} \lambda_{j} \geq 1$. The unique pairing condition has an implication for the set of diagonal quantum CMs with $\Lambda$. First, note that it follows from Fact~\ref{remark u_and_r_det_diag} that $\lambda_i \lambda_{2S+1-i} \geq 1$ for all $i = 1,\hdots, S$, since the set of thermal parameters of each $D \in \cD_\Lambda$ corresponds to $\{\nu_i\}_{i=1}^S= \{\sqrt{\lambda_i \lambda_{2S+1-i}}  \}_{i=1}^S$. Therefore, the unique perfect matching of $G(\Lambda)$, when $\Lambda$ satisfies the unique pairing condition, consists of the edges $\{ \{\lambda_i,\lambda_{i+2S-1} \} \}_{i=1}^S$. 

\begin{lemma}\label{lem P_2_and_diagrep}
Consider a non-degenerate eigenspectrum $\Lambda = (\lambda_i)_{i=1}^{2S}$, where the $\lambda_i$ are in a decreasing order, that satisfies the unique pairing condition. The set of diagonal quantum CMs with $\Lambda$ consists of $\cD_\Lambda$. 
\end{lemma}
\begin{proof}
Assume that there exists a diagonal quantum CM $D_0 = \mathrm{diag} [d_1,\hdots, d_{2S} ] $ with the eigenspectrum $\Lambda$ such that $D_0 \notin \cD_\Lambda$. Since $D_0$ is a quantum CM, it's thermal parameters $\nu_i = \sqrt{d_{2i-1} d_{2i} }$ satisfy $\nu_i \geq 1$. This implies that $\Lambda$ admits a valid pairing of its eigenvalues given by $\{\{d_{2i-1},d_{2i}\} \}_{i=1}^S$. This pairing must be different from $\{ \{ \lambda_i,\lambda_{2S+i-1} \}   \}_{i=1}^S$, since otherwise $D_0 \in \cD_\Lambda$, which contradicts our assumption. But then $\Lambda$ does not satisfy the unique pairing condition. The claim of the lemma then follows.
\end{proof}
\begin{corollary}
Consider a non-degenerate eigenspectrum $\Lambda = (\lambda_i)_{i=1}^{2S}$, where the $\lambda_i$ are in a decreasing order, that satisfies the $(S-1)$-pure unique pairing condition. Let $(\nu_i )_{i=1}^S$ denote the set of thermal parameters of a diagonal quantum CM $D$ with the eigenspectrum $\Lambda$, where the $\nu_i$ are in a non-decreasing order. Then, $\nu_i=1$ for $i = 1,\hdots, S-1$. 
\end{corollary}
\begin{proof}
According to Lem.~\ref{lem P_2_and_diagrep} $D \in \cD_\Lambda$, therefore, by Fact~\ref{remark u_and_r_det_diag}, $\{\nu_i\}_{i=1}^S= \{\sqrt{\lambda_i \lambda_{2S+1-i}}  \}_{i=1}^S$. The claim follows since, as discussed above, $\{\lambda_i ,\lambda_{2S+1-i} \}_{i=1}^S$ is the set of edges in the unique perfect matching associated with $\Lambda$. 
\end{proof}

For what follows we find it convenient to introduce a different method of labeling the eigenvalues.
\begin{definition}
Consider a non-degenerate $\Lambda = (\lambda_i)_{i=1}^{2S}$, where the $\lambda_i$ are in a decreasing order. Assume $\Lambda$ satisfies the $(S-1)$-pure unique pairing condition. If there exists a pair $\{\lambda_i,\lambda_{i+2S-1}\}$ such that $\lambda_i \lambda_{i+2S-1} >1$, we denote it by $\{ \chi_1,\chi_2\}$, where $\chi_1 > \chi_2$, $\chi_1>1$. If $\lambda_i \lambda_{i+2S-1} =1$ for all $i$, then we denote by $\chi_1$ the smallest eigenvalue that's greater than $1$, namely, $\chi_1 = \lambda_{S}$, and we denote $\chi_2 = \chi_1^{-1}$. In both cases the remaining pairs are denoted by $\{\omega_i,\omega_i^{-1} \} $, where $1 < \omega_2 < \hdots < \omega_S $.
\end{definition}
Note that $\omega_i \neq 1$ for any $i$ since that would imply $\omega_i^{-1} = 1$, which breaks the non-degeneracy of $\Lambda$.

It is instructive to provide an example of the set of graphs that are possible when $\Lambda$ satisfies the $(S-1)$-pure unique pairing condition for a particular value of $S$. We consider $S=4$, and we make repeated use of this example to help with conveying the arguments. 
\begin{example}
Consider a non-degenerate $\Lambda = (\lambda_i)_{i=1}^{8}$ of a quantum CM living in $4$ modes. Assume $\Lambda$ satisfies the $3$-pure unique pairing condition. Then, $G(\Lambda)$ belongs to one of the $5$ graphs depicted in Fig.~\ref{fig_S=4_graphs}. A solid segment corresponds to a pure edge, and two vertices $\lambda_i$, $\lambda_j$ share a dashed edge if $\lambda_i \lambda_j > 1$. The absence of an edge between an
 two vertices $\lambda_i$, $\lambda_j$ means $\lambda_i \lambda_j < 1$.

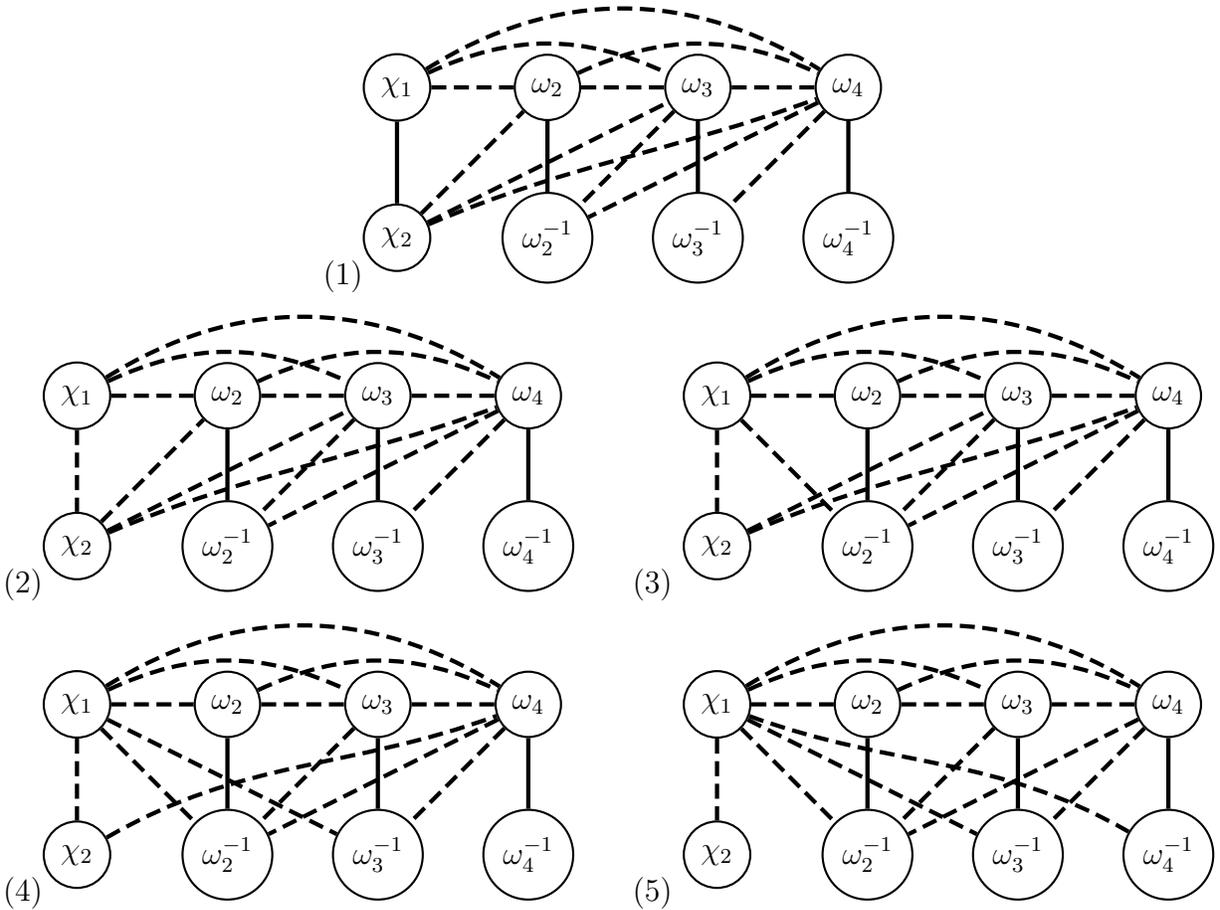
\begin{figure}[t!]

 (1)\begin{tikzpicture}[node distance={20mm},  thick, main/.style = {draw, circle}]
\node[main] (1) {$\chi_1$}; 
\node[main] (2) [right of=1] {$\omega_2$};
\node[main] (3) [ right of=2] {$\omega_3$}; 
\node[main] (4) [right of=3] {$\omega_4$};
\node[main] (5) [below of=1] {$\chi_2$}; 
\node[main] (6) [below of=2] {$\omega_2^{-1}$};
\node[main] (7) [below of=3] {$\omega_3^{-1}$};
\node[main] (8) [below of=4] {$\omega_4^{-1}$};
\draw [dash pattern=on 2mm off 1mm, ultra thick] (1) -- (2);
\draw [dash pattern=on 2mm off 1mm, ultra thick] (1) to[out=25,in=155] (3);
\draw [dash pattern=on 2mm off 1mm, ultra thick] (1) to[out=32,in=148] (4);
\draw [dash pattern=on 2mm off 1mm, ultra thick] (2) -- (3);
\draw [dash pattern=on 2mm off 1mm, ultra thick] (2) to[out=25,in=155] (4);
\draw [dash pattern=on 2mm off 1mm, ultra thick] (3) -- (4);
\draw [ ultra thick] (1) -- (5);
\draw [ ultra thick] (2) -- (6);
\draw [ ultra thick] (3) -- (7);
\draw [ ultra thick] (4) -- (8);
\draw [dash pattern=on 2mm off 1mm, ultra thick] (5) to[out=23,in=200] (4);
\draw [dash pattern=on 2mm off 1mm, ultra thick] (5) -- (3);
\draw [dash pattern=on 2mm off 1mm, ultra thick] (5) -- (2);

\draw [dash pattern=on 2mm off 1mm, ultra thick] (4) -- (6);
\draw [dash pattern=on 2mm off 1mm, ultra thick] (4) -- (7);
\draw [dash pattern=on 2mm off 1mm, ultra thick] (3) -- (6);

\end{tikzpicture}

 (2)\begin{tikzpicture}[node distance={20mm},  thick, main/.style = {draw, circle}]
\node[main] (1) {$\chi_1$}; 
\node[main] (2) [right of=1] {$\omega_2$};
\node[main] (3) [ right of=2] {$\omega_3$}; 
\node[main] (4) [right of=3] {$\omega_4$};
\node[main] (5) [below of=1] {$\chi_2$}; 
\node[main] (6) [below of=2] {$\omega_2^{-1}$};
\node[main] (7) [below of=3] {$\omega_3^{-1}$};
\node[main] (8) [below of=4] {$\omega_4^{-1}$};
\draw [dash pattern=on 2mm off 1mm, ultra thick] (1) -- (2);
\draw [dash pattern=on 2mm off 1mm, ultra thick] (1) to[out=25,in=155] (3);
\draw [dash pattern=on 2mm off 1mm, ultra thick] (1) to[out=32,in=148] (4);
\draw [dash pattern=on 2mm off 1mm, ultra thick] (2) -- (3);
\draw [dash pattern=on 2mm off 1mm, ultra thick] (2) to[out=25,in=155] (4);
\draw [dash pattern=on 2mm off 1mm, ultra thick] (3) -- (4);
\draw [dash pattern=on 2mm off 1mm, ultra thick] (1) -- (5);
\draw [ ultra thick] (2) -- (6);
\draw [ ultra thick] (3) -- (7);
\draw [ ultra thick] (4) -- (8);
\draw [dash pattern=on 2mm off 1mm, ultra thick] (5) to[out=23,in=200] (4);
\draw [dash pattern=on 2mm off 1mm, ultra thick] (5) -- (3);
\draw [dash pattern=on 2mm off 1mm, ultra thick] (5) -- (2);

\draw [dash pattern=on 2mm off 1mm, ultra thick] (4) -- (6);
\draw [dash pattern=on 2mm off 1mm, ultra thick] (4) -- (7);
\draw [dash pattern=on 2mm off 1mm, ultra thick] (3) -- (6);

\end{tikzpicture}  \hspace {0.5cm} (3)  \begin{tikzpicture}[node distance={20mm},  thick, main/.style = {draw, circle}]
\node[main] (1) {$\chi_1$}; 
\node[main] (2) [right of=1] {$\omega_2$};
\node[main] (3) [ right of=2] {$\omega_3$}; 
\node[main] (4) [right of=3] {$\omega_4$};
\node[main] (5) [below of=1] {$\chi_2$}; 
\node[main] (6) [below of=2] {$\omega_2^{-1}$};
\node[main] (7) [below of=3] {$\omega_3^{-1}$};
\node[main] (8) [below of=4] {$\omega_4^{-1}$};
\draw [dash pattern=on 2mm off 1mm, ultra thick] (1) -- (2);
\draw [dash pattern=on 2mm off 1mm, ultra thick] (1) to[out=25,in=155] (3);
\draw [dash pattern=on 2mm off 1mm, ultra thick] (1) to[out=32,in=148] (4);
\draw [dash pattern=on 2mm off 1mm, ultra thick] (2) -- (3);
\draw [dash pattern=on 2mm off 1mm, ultra thick] (2) to[out=25,in=155] (4);
\draw [dash pattern=on 2mm off 1mm, ultra thick] (3) -- (4);
\draw [dash pattern=on 2mm off 1mm, ultra thick] (1) -- (5);
\draw [ ultra thick] (2) -- (6);
\draw [ ultra thick] (3) -- (7);
\draw [ ultra thick] (4) -- (8);
\draw [dash pattern=on 2mm off 1mm, ultra thick] (5) to[out=23,in=200] (4);
\draw [dash pattern=on 2mm off 1mm, ultra thick] (5) -- (3);
\draw [dash pattern=on 2mm off 1mm, ultra thick] (1) -- (6);

\draw [dash pattern=on 2mm off 1mm, ultra thick] (4) -- (6);
\draw [dash pattern=on 2mm off 1mm, ultra thick] (4) -- (7);
\draw [dash pattern=on 2mm off 1mm, ultra thick] (3) -- (6);

\end{tikzpicture}

(4)\begin{tikzpicture}[node distance={20mm},  thick, main/.style = {draw, circle}]
\node[main] (1) {$\chi_1$}; 
\node[main] (2) [right of=1] {$\omega_2$};
\node[main] (3) [ right of=2] {$\omega_3$}; 
\node[main] (4) [right of=3] {$\omega_4$};
\node[main] (5) [below of=1] {$\chi_2$}; 
\node[main] (6) [below of=2] {$\omega_2^{-1}$};
\node[main] (7) [below of=3] {$\omega_3^{-1}$};
\node[main] (8) [below of=4] {$\omega_4^{-1}$};
\draw [dash pattern=on 2mm off 1mm, ultra thick] (1) -- (2);
\draw [dash pattern=on 2mm off 1mm, ultra thick] (1) to[out=25,in=155] (3);
\draw [dash pattern=on 2mm off 1mm, ultra thick] (1) to[out=32,in=148] (4);
\draw [dash pattern=on 2mm off 1mm, ultra thick] (2) -- (3);
\draw [dash pattern=on 2mm off 1mm, ultra thick] (2) to[out=25,in=155] (4);
\draw [dash pattern=on 2mm off 1mm, ultra thick] (3) -- (4);
\draw [dash pattern=on 2mm off 1mm, ultra thick] (1) -- (5);
\draw [ ultra thick] (2) -- (6);
\draw [ ultra thick] (3) -- (7);
\draw [ ultra thick] (4) -- (8);
\draw [dash pattern=on 2mm off 1mm, ultra thick] (5) to[out=30,in=200] (4);
\draw [dash pattern=on 2mm off 1mm, ultra thick] (1) -- (6);
\draw [dash pattern=on 2mm off 1mm, ultra thick] (1) -- (7);

\draw [dash pattern=on 2mm off 1mm, ultra thick] (4) -- (6);
\draw [dash pattern=on 2mm off 1mm, ultra thick] (4) -- (7);
\draw [dash pattern=on 2mm off 1mm, ultra thick] (3) -- (6);

\end{tikzpicture} \hspace{0.5 cm} (5)
\begin{tikzpicture}[node distance = 20mm,  thick, main/.style = {draw, circle}]
\node[main] (1) {$\chi_1$}; 
\node[main] (2) [right of = 1] {$\omega_2$};
\node[main] (3) [ right of=2] {$\omega_3$}; 
\node[main] (4) [right of=3] {$\omega_4$};
\node[main] (5) [below of=1] {$\chi_2$}; 
\node[main] (6) [below of=2] {$\omega_2^{-1}$};
\node[main] (7) [below of=3] {$\omega_3^{-1}$};
\node[main] (8) [below of=4] {$\omega_4^{-1}$};
\draw [dash pattern=on 2mm off 1mm, ultra thick] (1) -- (2);
\draw [dash pattern=on 2mm off 1mm, ultra thick] (1) to[out=25,in=155] (3);
\draw [dash pattern=on 2mm off 1mm, ultra thick] (1) to[out=32,in=148] (4);
\draw [dash pattern=on 2mm off 1mm, ultra thick] (2) -- (3);
\draw [dash pattern=on 2mm off 1mm, ultra thick] (2) to[out=25,in=155] (4);
\draw [dash pattern=on 2mm off 1mm, ultra thick] (3) -- (4);
\draw [dash pattern=on 2mm off 1mm, ultra thick] (1) -- (5);
\draw [ultra thick] (2) -- (6);
\draw [ultra thick] (3) -- (7);
\draw [ ultra thick] (4) -- (8);
\draw [ dash pattern=on 2mm off 1mm, ultra thick] (1) -- (6);
\draw [ dash pattern=on 2mm off 1mm, ultra thick] (1) -- (7);
\draw [ dash pattern=on 2mm off 1mm, ultra thick] (1) to[out=340,in=150] (8);

\draw [dash pattern=on 2mm off 1mm, ultra thick] (4) -- (6);
\draw [dash pattern=on 2mm off 1mm, ultra thick] (4) -- (7);
\draw [dash pattern=on 2mm off 1mm, ultra thick] (3) -- (6);

\end{tikzpicture}

\caption{The set of graphs $G(\Lambda)$ that are possible for non-degenerate $\Lambda$ satisfying the $(S-1)$-pure unique pairing condition when $S=4$. The vertices are labeled by the eigenvalues. The solid and dashed edges connect vertices associated with eigenvalues whose product is $=1$ and $>1$, respectively. If two vertices $\lambda_i$ and $\lambda_j$ do not share an edge, then $\lambda_i\lambda_j <1$.}
\label{fig_S=4_graphs}
\end{figure}

\end{example}
Let us dissect the example. The edges connecting the pairs $\{\omega_i,\omega_i^{-1}\}$ are solid since $\omega_i \omega_i^{-1}=1$. Notice that in graph (1) $\{\chi_1, \chi_2 \}$ is solid, while in the rest of the graphs it is dashed. There cannot exist two solid edges connected to any single vertex as that would imply the two vertices at the other ends of the edges are equal, thus breaking the non-degeneracy of $\Lambda$. For each graph we first note that since $\omega_4 > \omega_3 > \omega_2$ in the top row of vertices, then $\omega^{-1}_4 < \omega^{-1}_3 < \omega^{-1}_2$ in the bottom row. This is why $\omega_4$ is connected to both $\omega_2^{-1}$ and to $\omega_3^{-1}$ with dashed lines, and similarly for the edge $\{ \omega_3,\omega_2^{-1} \}$, as well as why there are no edges between $\omega_i$ and $\omega_j^{-1}$ for $j>i$. This logic extends to vertices $\chi_1,\chi_2$ of graph (1), where $ \chi_1\chi_2=1$ and $\chi_1 < \omega_2 < \hdots < \omega_S $. All vertices in the top rows are connected by dashed edges since $\chi_1>1$ and $\omega_i>1$ for all $i$. The next feature to notice is that $\chi_2$ is not connected to any of the $\omega_i^{-1}$. This is because otherwise the perfect matching would not be unique, namely, one could also pair $\chi_2$ with $\omega_i^{-1}$ and $\chi_1$ with $\omega_i$. 

We can thus see that the main differences between the graphs $(2)-(5)$ lie in the connections between $\chi_1$ and the $\omega_i^{-1}$, and in the connections between $\chi_2$ and the $\omega_i$. To explain these connections, we first note that the edges $\{\chi_1,\omega_i^{-1}  \}$ and $\{\chi_2,\omega_i  \}$ cannot exist simultaneously for any $i$ as that would introduce another distinct perfect matching. Next, because of the ordering of the $\omega_i$, if the edge $\{\chi_1,\omega_i^{-1}  \}$ exists, then so do the edges $\{\chi_1,\omega_j^{-1}  \}$ for all $j<i$. Similarly, if the edge $\{\chi_2,\omega_i  \}$ exists, then so do the edges $\{\chi_2,\omega_j  \}$ for all $j>i$. The final feature to note is that, for each $i$, either $\{\chi_1,\omega_i^{-1}  \}$ or $\{\chi_2,\omega_i  \}$ exists. This is because if neither edge exists, then one simultaneously has $\chi_1 \omega_i^{-1}  <1$ and $\chi_2 \omega_i <1$, and that implies $\chi_1 \chi_2 <1$. Thus, for example, one way to specify one of the graphs where $ \chi_1\chi_2>1$ is to count the number of edges connecting $\chi_1$ with the $\omega_i^{-1}$. Since there are $4$ possibilities, including the case with no edges, there are $4$ distinct graphs.

The observations in the previous paragraph generalize to arbitrary $S$. More specifically, the number of distinct possible graphs equals $S+1$, where $ \{\chi_1, \chi_2 \}$ is pure in only one of them. The remaining graphs can be distinguished by the highest value of $i$ such that $\{\chi_1,\omega_i^{-1}\}$ exists, including the case where there is no such value (e.g. graph (2) in Fig.~\ref{fig_S=4_graphs}). 

\begin{definition}\label{define class_of_spectra} 
We denote the class of non-degenerate eigenspectra that satisfy the $(S-1)$-pure unique pairing condition by $\cC(S)$. We partition $\cC(S)$ into $S+1$ subclasses, $\{\cC_i(S)\}_{i=0}^{S}$. The subclass $\cC_0(S)$ is composed of all $\Lambda$ that have a graph $G(\Lambda)$ where $\{\chi_1,\chi_2\}$ is pure. For each $1 < i < S$, the subclass $\cC_i(S)$ is composed of all $\Lambda$ that have a graph $G(\Lambda)$ where $\chi_1$ has an edge with each $\omega_j^{-1}$ where $j \leq i$, and no edge when $j>i$. $\cC_1(S)$ ($\cC_S(S)$) is composed of all $\Lambda$ that have a graph $G(\Lambda)$ where $\chi_1$ has no edge with any $\omega_j^{-1}$ ($\chi_1$ has an edge with every $\omega_j^{-1}$). 
\end{definition}

 \section{Classification of eigenspectra}\label{sec main_result}

In this section we present our main findings. We start by deriving certain expressions for the $\det (\Gamma^{(i)})$ (see the end of Sec.~\ref{sec prob_form} for the definition) of an arbitrary real symmetric matrix $\Gamma$. 
\begin{lemma}\label{lem det(Gamma_i) }
Consider a diagonal matrix $D =  \mathrm{diag} [d_1,\hdots,d_{2S}]$, and an arbitrary $O \in \mathrm{O}(2S)$. Let $\Gamma = O^T D O$. Then, 
\begin{equation}
\det (\Gamma^{(i)}) = \sum_{k>l}  \left[\det    ( O_{lk}^{2i-1,2i})   \right]^2 d_{l} d_{k},
\end{equation}
for each $i = 1,\hdots,S$.
\end{lemma}
\begin{proof}
We prove by direct derivation. For a given $i$, let us write $\Gamma^{(i)}$ explicitly, that is,
\begin{align}\label{eq: D_1}
(\Gamma^{(i)})_{nm} & = \sum_{k=1}^{2S} \sum_{l=1}^{2S}  (O^T)_{n,k}D_{kl}O_{lm},
\end{align}
for $n = 2i-1,2i$ and $m=2i-1,2i$. Notice that $\Gamma^{(i)}$ can be written as $\Gamma^{(i)} = (\tilde O)^T D \tilde O$, where $\tilde O = (c_{2i-1}[O], c_{2i}[O]  )$. We can thus use the Cauchy-Binet formula for the product of a $2 \times (2S-2)$ matrix, $(\tilde O)^T$, with a $(2S-2) \times 2$ matrix, $D \tilde O$:
\begin{align}
\det(\Gamma^{(i)}) &= \sum_{k>l} \det   \begin{pmatrix} \tilde O^T_{1,l} & \tilde O^T_{1,k} \\ \tilde O^T_{2,l} & \tilde O^T_{2,k}  \end{pmatrix} \det  \begin{pmatrix} (D \tilde O)_{l,1} &(D \tilde O)_{l,2} \\ (D \tilde O)_{k,1} & (D \tilde O)_{k,2}  \end{pmatrix} \nonumber \\
& =  \sum_{k>l} \det   \begin{pmatrix} \tilde O_{l,1} & \tilde O_{l,2} \\ \tilde O_{k,1} & \tilde O_{k,2}  \end{pmatrix} \det  \begin{pmatrix} (D \tilde O)_{l,1} &(D \tilde O)_{l,2} \\ (D \tilde O)_{k,1} & (D \tilde O)_{k,2}  \end{pmatrix},
\end{align}
where the sum is over all $k,l$ in $2S \geq k >l \geq 1$.

We use the fact that $D$ is diagonal to express $\begin{pmatrix} (D \tilde O)_{l,1} &(D \tilde O)_{l,2} \\ (D \tilde O)_{k,1} & (D \tilde O)_{k,2}  \end{pmatrix} = \begin{pmatrix} d_l \tilde O_{l,1} &   d_l \tilde O_{l,2} \\ d_k \tilde O_{k,1} & d_k \tilde O_{k,2}  \end{pmatrix}$. Then,
\begin{align}
\det ( \Gamma^{(i)} ) &= \sum_{k>l} \det    \begin{pmatrix} \tilde O_{l,1} & \tilde O_{l,2} \\ \tilde O_{k,1} & \tilde O_{k,2}  \end{pmatrix}  \det   \begin{pmatrix} d_l \tilde O_{l,1} &   d_l \tilde O_{l,2} \\ d_k \tilde O_{k,1} & d_k \tilde O_{k,2}  \end{pmatrix} \nonumber \\
& = \sum_{k>l}  \left( \det     \begin{pmatrix} \tilde O_{l,1} & \tilde O_{l,2} \\ \tilde O_{k,1} & \tilde O_{k,2}  \end{pmatrix}  \right)^2 d_{l} d_{k} \nonumber \\
& = \sum_{k>l}  \left( \det    \begin{pmatrix}   O_{l,2i-1} &  O_{l,2i} \\  O_{k,2i-1} &  O_{k,2i}  \end{pmatrix}  \right)^2 d_{l} d_{k} .
\end{align}
   
\end{proof}
According to Fact~\ref{O_det_conv}, the set $\big\{ \left[\det    ( O_{lk}^{2i-1,2i})   \right]^2  \big\}_{k>l}  $ in Lem~\ref{lem det(Gamma_i) } is a set of non-negative weights that sum to $1$. Hence, each $\det(\Gamma^{(i)})$ is a convex combination of the set of products $\{d_ld_k\}_{k>l} $. Thus, if $D$ is a quantum CM, one way to obtain $\det(\Gamma^{(i)})<1$ is if $\det    ( O_{lk}^{2i-1,2i}) =0$ for all $\{l,k\}$ where $d_ld_k >1$, and there exists a pair $ \{ n,m \}$ such that $\det    ( O_{nm}^{2i-1,2i}) \neq 0$ and $d_n d_m <1$. This observation is central to the strategy used in the proof of Thm.~\ref{thm sub}, in constructing the "uncertainty-principle-violating witness" alluded to earlier.

\begin{theorem}\label{thm sub}
Let $\Gamma$ and $\Gamma'$ be two quantum CMs with the same non-degenerate eigenspectrum $\Lambda= (\lambda_i)_{i=1}^{2S}$.
Suppose that $\Lambda$ satisfies the $(S-1)$-pure unique pairing
condition. Then $\Gamma' = K^T\Gamma K$ for some $K \in \mathrm{SpO}(2S,\mathbb{R})$.
\end{theorem}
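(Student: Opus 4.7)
The plan is to reduce the theorem to the following equivalent claim: every orthogonal $O$ such that $O^T D O$ is a valid quantum CM (with $D$ the permuted diagonal representative of Eq.~\ref{eq: D}) admits a factorization $O = WP$ with $W \in \mathrm{SpO}(2S, \mathbb{R})$ and $P$ an orthogonal matrix satisfying $P^T D P = D$. Writing $\Gamma = U_{\Gamma} D U_{\Gamma}^T$ and $\Gamma' = U_{\Gamma'} D U_{\Gamma'}^T$ via the spectral theorem, the claim yields $\Gamma = W_{\Gamma} D W_{\Gamma}^T$ and $\Gamma' = W_{\Gamma'} D W_{\Gamma'}^T$ with $W_{\Gamma}, W_{\Gamma'} \in \mathrm{SpO}(2S, \mathbb{R})$, whence $\Gamma' = W^T \Gamma W$ with $W = W_{\Gamma} W_{\Gamma'}^T \in \mathrm{SpO}(2S, \mathbb{R})$. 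I would argue the reduced claim by contradiction: given a ``non-trivial'' ONS $O$ (in the sense defined in Sec.~\ref{sec prob_form}), I construct a $2 \times 2$ principal submatrix of an OS conjugate of $O^T D O$ whose determinant is strictly less than $1$, violating the consequence $\det(\Gamma_1) \geq 1$ of Eq.~\ref{eq: uncer_prin} recorded at the end of Sec.~\ref{sec prob_form}.

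The witness is constructed by first applying Lem.~\ref{lem col12_O_rep_3} to find an OS matrix $W_1$ such that the first two columns of $OW_1$ vanish on a prescribed set of $S-1$ rows. Writing $v, w$ for these two columns, $I$ for their (at most $(S+1)$-element) common support, and $\Gamma'_1$ for the upper-left $2 \times 2$ block of $(OW_1)^T D (OW_1)$, Cauchy-Binet (Fact~\ref{O_det_conv}) gives
\begin{equation*}
\det(\Gamma'_1) = \sum_{\{j,k\} \subset I} (v_j w_k - v_k w_j)^2\, d_j d_k,
\end{equation*}
a convex combination (weights summing to $1$ by orthonormality of $v$ and $w$) of the products $d_j d_k$ indexed by unordered pairs in $I$. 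Splitting the analysis along the three classes of $D$ identified after Eq.~\ref{eq: D} (pure $\nu = 1$; non-pure with $\nu e^{2r_1 - 2r_2} \leq 1$; non-pure with $\nu e^{-2r_1 + 2r_2} \leq 1$), I aim to show that the $S-1$-pure unique pairing condition forces $\det(\Gamma'_1) < 1$ unless $v$ and $w$ are supported on the two indices of a single matched pair of $D$; in that case Lem.~\ref{lem ROW} yields a block decomposition of $OW_1$ isolating the matched pair, and induction on $S$ (the reduced eigenspectrum still satisfying the $(S-2)$-pure unique pairing condition) handles the remaining modes. Lem.~\ref{lem prop 1} and Lem.~\ref{lem col12_O_rep} are used in the non-pure classes to first align the non-pure pair with the rows being analyzed before invoking Lem.~\ref{lem ROW}.

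The main obstacle will be the weight-concentration argument in the non-pure classes: the matched-pair product $\nu^2$ exceeds $1$, so a convex combination involving it can exceed $1$ even when every cross-pair product is $< 1$. Closing this gap requires combining the orthonormality constraints on $v, w$ (which restrict the Cauchy-Binet weights $(v_j w_k - v_k w_j)^2$), the OS parameterization of Eq.~\ref{col_OS_exp} (which pins down the allowed weight configurations when $OW_1$ is close to OS), and the sharp inequalities $\nu e^{\pm 2 (r_1 - r_j)} \leq 1$ furnished by the $S-1$-pure unique pairing condition. I expect the bulk of the technical work to lie in this interplay, with Lem.~\ref{lem ROW} serving as the key block-diagonalization tool that converts any residual non-OS component of $O$ into a determinant-reducing contribution in $\Gamma'_1$.
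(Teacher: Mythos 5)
Your reduction, witness, and overall architecture coincide with the paper's: pass to the diagonal representative $D$, use Cauchy--Binet (Fact~\ref{O_det_conv}) to write the upper-left $2\times 2$ determinant of an OS conjugate of $O^TDO$ as a convex combination of eigenvalue products, then either drive that determinant below $1$ or collapse the two columns onto a matched pair, peel off that mode with Lem.~\ref{lem ROW}, and recurse. The gap is that the step carrying all the weight of the theorem --- showing that the $S-1$-pure unique pairing condition forces the convex combination below $1$ unless the columns' support collapses --- is exactly the part you defer (``the main obstacle'', ``the bulk of the technical work''). That step is not a routine verification: it occupies the three Cases and the Subcase of the paper's proof, and it is the only place the pairing condition actually enters. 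As written, the proposal is a plan rather than a proof.

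Two concrete points where the plan as stated needs repair. First, the dichotomy ``$\det(\Gamma'_1)<1$ unless $v$ and $w$ are supported on the two indices of a single matched pair'' is false in the presence of degenerate eigenvalues: the support can spread over all rows whose diagonal entries of $D$ coincide (the $\Theta(\cdot)$ rows of Eq.~\ref{eq: tilde(O) case_1}) while the determinant stays equal to $1$; one then needs the trivial-ONS rotation $R$ supplied by Lem.~\ref{lem ROW}, checked to commute with $D$, to collapse that support before the mode can be removed. Second, applying Lem.~\ref{lem col12_O_rep_3} uniformly leaves a common support of $S+1$ rows, and whether all $\binom{S+1}{2}$ surviving products $d_jd_k$ are $\leq 1$ depends delicately on which $S-1$ rows you kill; in the pure class and in the class with $\nu e^{2r_1-2r_2}\leq 1$ the paper instead uses Lem.~\ref{lem col12_O_rep} to make the two columns proportional off a single row chosen to carry the smallest eigenvalue $e^{-2r_S}$, so that every surviving weight multiplies a product of the form $e^{-2r_S}D_{ll}\leq 1$ --- a much smaller and more tractable set of terms. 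Your route resembles the paper's treatment of the subclass $\cC(S)$ and can probably be made to work elsewhere, but the choice of rows and the subsequent weight-concentration analysis must be carried out separately for each of the three classes; that is precisely the missing content.
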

\begin{proof}

We want to show that given any quantum CM $\Gamma$ with the eigenspectrum $\Lambda$, and any ONS matrix $O$, $O^T \Gamma O$ cannot be a quantum CM. It suffices to show that there exists some $D \in \cD_\Lambda$, such that $O^T D O$ is not a quantum CM for every ONS matrix $O$. This is because, for any $\Gamma$ with the eigenspectrum $\Lambda$, $\Gamma = (O')^T D O'$ for some $O' \in \mathrm{O}(2S)$. Thus, if we show the former, that would imply $O' \in \mathrm{SpO}(2S,\mathbb{R})$, which, in turn, implies the theorem statement.

Now, if $O^T D O$ is not a quantum CM, then the same is true for any $W^TO^T D O W$, where $W \in \mathrm{SpO}(2S,\mathbb{R})$, and vice versa. This is because symplectic transformations do not affect the symplectic eigenvalues. Our proof strategy is to show that there exists $W \in  \mathrm{SpO}(2S,\mathbb{R})$ such that $\det ( (W^TO^T D O W)^{(i)})<1  $ for some $1 \leq i \leq S$. 

Recall Def.~\ref{define class_of_spectra}. The proof steps are the same for the classes $\cC_0(S),\hdots,\cC_{S-1}(S)$, while a slightly different method is used for the class $\cC_S(S)$. We treat the former first. 

\begin{casse}
The proof for $\Lambda \in \bigcup\limits_{i=0}^{S-1} \cC_i(S) $.
\end{casse}

Consider a maximally mode-reduced transform of an ONS matrix $O$, $OW_1 = I_{2S-2n} \oplus O'$, for some $W_1 \in \mathrm{SpO}(2S,\mathbb{R})$ and $O' \in   \mathrm{O}(2n) $. Recall that the maximally reduced mode number $n$ satisfies $S \geq n \geq 2$ according to Lem.~\ref{lem n_geq_2}. Let us denote $i_n = S-n+1$. Then, notice that $\det ( (OW_1)_{kl}^{2i_n-1,2i_n} ) = 0$ for all $k<l$ where $ k \leq 2S-2n$ or $ l \leq 2S-2n$. According to Lem.~\ref{lem col12_O_rep}, there exists $W_2 \in \mathrm{SpO}(2n,\mathbb{R})$ such that $c_{1}[O'W_2]_k = \lambda c_{2}[O'W_2]_k$ for $k=2,3,\hdots,2n$, for some constant $\lambda \neq 0$. Then, $W = W_1 ( I_{2S-2n} \oplus W_2)$ is the matrix we are looking for. We proceed to show this.

First, note that $OW = OW_1 ( I_{2S-2n} \oplus W_2) = (I_{2S-2n} \oplus O') ( I_{2S-2n} \oplus W_2) =  I_{2S-2n} \oplus O'W_2$. Thus, $\det ( (OW)_{kl}^{2i_n-1,2i_n} ) = 0$ for all $k<l$ where $ k \neq 2i_n-1$ or $ l \leq 2S-2n$. Next, consider the following $D =\textrm{diag}[d_1,\hdots,d_{2S}]  \in \cD_\Lambda$:
\begin{equation}
D = \textrm{diag}[\omega^{-1}_{S-n},\omega_{S-n},\omega^{-1}_{S-n+1},\omega_{S-n+1},\hdots,\omega^{-1}_{S-1},\omega_{S-1}  ,\omega^{-1}_S,\omega_S,  \hdots   ,\chi_2,\chi_1].
\end{equation}
Here, the second ``$\hdots$" denotes some valid arrangement of the $\omega_i,\omega_i^{-1} $ for $i = 2,\hdots, S-n-1$. Importantly, notice that the eigenvalue $\omega_{S}^{-1}$ occupies the $(2i_n-1)$'th entry of the main diagonal of $D$.

Using Lem.~\ref{lem det(Gamma_i) },
\begin{align}
\det ( (  W^TO^T D O W )^{(i_n)}  ) & =  \sum_{k<l}  \left[\det    ( (OW)_{kl}^{2i_n-1,2i_n})   \right]^2 d_{l} d_{k} \nonumber \\
& =  \sum_{l=2i_n}^{2S}  \left[\det    ( (OW)_{2i_n-1,l}^{2i_n-1,2i_n})   \right]^2 d_{l} d_{2i_n-1} \nonumber \\
& =  \sum_{l=2i_n}^{2S}  \left[\det    ( (OW)_{2i_n-1,l}^{2i_n-1,2i_n})   \right]^2 d_{l} \omega^{-1}_S.
\end{align}
Thus, $\det ( (  W^TO^T D O W )^{(i_n)}  )$ is a convex combination of the set of products \[ \{ \omega_S^{-1}\omega_S  \} \cup \{ \omega_S^{-1}\omega_j \}_{j=2}^{S-n-1} \cup \{ \omega_S^{-1}\omega^{-1}_j \}_{j=2}^{S-n-1} \cup \{ \omega_S^{-1}\chi_j\ \}_{j=1,2} .  \] These products are all less than $1$ except for $  \omega_S^{-1}\omega_S = 1 $. For pictorial illustration, consider the graphs (1)-(4) in Fig.~\ref{fig_S=4_graphs} for $S=4$, where the vertex $\omega_4^{-1}$ has an edge with only the vertex $\omega_4$. Thus, $\det ( (  W^TO^T D O W )^{(i_n)}  )  \leq 1$, with $ \det ( (  W^TO^T D O W )^{(i_n)}  )  =1$ if and only if $\det ( (OW)_{2i_n-1,l}^{2i_n-1,2i_n}) = 0 $ for all $l \neq 2i_n$. The latter is not possible since it entails that $\det    ( (OW)_{2i_n-1,2i_n}^{2i_n-1,2i_n}) = 1 $, which is in violation of Lem.~\ref{lem det=1_impos}. More specifically, according to Lem.~\ref{lem det=1_impos} there does not exist $ W_2 \in  \mathrm{SpO}(2n,\mathbb{R})$ such that $\det    ( (  OW_1 ( I_{2S-2n} \oplus W_2) )_{2i_n-1,2i_n}^{2i_n-1,2i_n}) = 1 $ when $OW_1 $ is a maximally mode-reduced transform of $O$ with the maximally reduced mode number $n$. Thus, $ \det ( (  W^TO^T D O W )^{(i_n)}  )  <1$.

\begin{casse}
The proof for $\Lambda \in \cC_S(S) $.
\end{casse}
Again, consider a maximally mode-reduced transform of an ONS matrix $O$, $OW_1 = I_{2S-2n} \oplus O'$, for some $W_1 \in \mathrm{SpO}(2S,\mathbb{R})$ and $O' \in   \mathrm{O}(2n) $. Again, denote $i_n = S-n+1$. According to Lem.~\ref{lem col12_O_rep_3} there exists $W_2 \in \mathrm{SpO}(2n,\mathbb{R})$ such that the rows $ 4,6,\hdots, 2n$ in $c_{1}[O'W_2]$ and $c_{2}[O'W_2]$ vanish. Then, $W = W_1 ( I_{2S-2n} \oplus W_2)$ is the matrix we are looking for, which we now proceed to show.

First, note that $OW = I_{2S-2n} \oplus O'W_2$. Thus, $\det ( (OW)_{kl}^{2i_n-1,2i_n} ) \neq 0$ only when $k,l \in \{2i_n-1,2i_n\} \cup \{ 2i_n+1, 2i_n+3,\hdots,2S-1  \}$. Now, consider the following $D =\textrm{diag}[d_1,\hdots,d_{2S}]  \in \cD_\Lambda$:
\begin{equation}
D = \textrm{diag}[\hdots, \omega^{-1}_2,\omega_2, \omega^{-1}_3,\omega_3, \hdots,  \omega^{-1}_{ n },\omega_{n  },\chi_2,\chi_1].
\end{equation}
Here, the first ``$\hdots$" denotes some valid arrangement of the $\omega_i,\omega_i^{-1} $ for $i = n+1,\hdots, S$. Notice that the eigenvalue $\omega_{2}^{-1}$ occupies the $(2i_n-1)$'th entry of the main diagonal of $D$. Next, using Lem.~\ref{lem det(Gamma_i) },
\begin{align}
& \det ( (  W^TO^T D O W )^{(i_n)}  )  =  \sum_{k<l}  \left[\det     (OW)_{kl}^{2i_n-1,2i_n}   \right]^2 d_{l} d_{k} \nonumber \\
& = \left[\det     (OW)_{2i_n-1,2i_n}^{2i_n-1,2i_n}   \right]^2 d_{2i_n-1}d_{2i_n}  +  \sum\limits_{\substack{k = 2i_n-1 \\ k \textrm{ odd}}}^{2S-3} \sum_{\substack{l = k+2 \\ l \textrm{ odd}}}^{2S-1} \left[\det     (OW)_{kl}^{2i_n-1,2i_n} \right]^2 d_{l} d_{k}  + \hdots \nonumber \\
 & + \sum_{\substack{l = 2i_n+1 \\ l \textrm{ odd}}}^{2S-1} \left[\det     (OW)_{2i_n,l}^{2i_n-1,2i_n} \right]^2 d_{2i_n} d_{l}.
\end{align}
Thus, $\det ( (  W^TO^T D O W )^{(i_n)}  )$ is a convex combination of the products in the set
\begin{align}
\{ \omega_j^{-1}\omega_2  \}_{j=2}^n \cup \{ \chi_2 \omega_2 \} \cup \{ \{ \omega_j^{-1}\omega^{-1}_k \}_{k=j+1}^n \}_{j=2}^{n-1}  \cup \{ \omega_j^{-1}\chi_2 \}_{j=2}^n .
\end{align} 
Each of the products, with the exception of $\omega_2^{-1}\omega_2 = 1$, is less than $1$. For pictorial illustration, consider graph (5) of Fig.~\ref{fig_S=4_graphs} for $S=4$, where the only edge among the pairs of vertices appearing in the set of products above is between $\omega_2^{-1}$ and $\omega_2$. Thus, $ \det ( (  W^TO^T D O W )^{(i_n)}  )  <1$ unless $\det ( (OW)_{2i_n-1,2i_n}^{2i_n-1,2i_n}) = 1 $. But the latter is impossible according to Lem.~\ref{lem det=1_impos}.
\end{proof}

 It follows from Thm.~\ref{thm sub} that, given $\Lambda \in \cC(S)$, all corresponding quantum CMs have the same thermal and squeezing parameters. This is a direct consequence of the fact that the set of quantum CMs form a single orbit under the action of $\mathrm{SpO}(2S,\mathbb{R})$. Thus, $ \cC(S) \subset \cP_1(S)$ and $ \cC(S) \subset \cP_2(S)$. We can find examples of $\Lambda \notin \cC(S)$ such that $\Lambda \in \cP_1(S)$ and $\Lambda \in \cP_2(S)$. For any $S$, any $\Lambda$ where all eigenvalues are equal clearly falls in this category. For such eigenspectra, the orbit under the action of $\mathrm{SpO}(2S,\mathbb{R})$ is composed of a single CM. But we can find a more general class of examples.
\begin{lemma}\label{lemma 2S-1-deg eigenspectrum}
Consider an eigenspectrum $\Lambda =  (\lambda_i )_{i=1}^{2S}$ of a quantum CM, where $\lambda_2 = \hdots = \lambda_{2S}$. Then, $\Lambda \in \cP_1$, $\Lambda \in \cP_2$. 
\end{lemma}
\begin{proof}
Consider $D = \textrm{diag} [\lambda_1,\hdots, \lambda_{2S} ] \in \cD_\Lambda$. Then, $O D O^T$ is a trivial ONS transformation for every ONS matrix $O$. In particular, by applying Lem.~\ref{lem OW=1_oplus_O'}, we can write $O =  O' W $, where $O' = I_1 \oplus O''$ for some $O'' \in \mathrm{O}(2S-1) $, and $W \in \mathrm{SpO}(2S,\mathbb{R})$. And it can be seen that $(O')^TDO'  = D$. Hence, every quantum CM can be written as $WDW^T$ for some $W \in \mathrm{SpO}(2S,\mathbb{R})$, and, therefore, has the same set of thermal and squeezing parameters as $D$.
 \end{proof}

Notice that the set of examples of eigenspectra we described above have degeneracies. Further, they satisfy a more general version of the unique pairing condition.
\begin{definition}
Consider an eigenspectrum $\Lambda = (\lambda_i )_{i=1}^{2S}$ of a quantum CM, where the $\lambda_i$ are in a non-ascending order. Recall that the pairing $\{ \{ \lambda_i,\lambda_{2S-i+1}  \}    \}_{i=1}^S$ satisfies $\lambda_i\lambda_{2S-i+1} \geq 1$ for all $i$. We say $\Lambda$ satisfies the general unique pairing condition if all valid pairings of the $\lambda_i$ are identical to $\{ \{ \lambda_i,\lambda_{2S-i+1}  \}    \}_{i=1}^S$. 
\end{definition}
Note that the general unique pairing condition is equivalent to the unique pairing condition when the eigenspectrum is non-degenerate.
\begin{remark}\label{remark Lambda_not_uniqpair}
Consider an eigenspectrum $\Lambda$ of a quantum CM. Assume $\Lambda$ does not satisfy the general unique pairing condition. Then, $\Lambda \notin \cP_1$, $\Lambda \notin \cP_2$.
\end{remark}
At the risk of seeming over-explanatory, let us comment on this remark. First, if $\Lambda$ does not satisfy the general unique pairing condition, then there are at least two sets of thermal and squeezing parameters compatible with $\Lambda$. This is because the thermal and squeezing parameters corresponding to a given pairing of eigenvalues determine and are determined by the pairing, as can be seen from Fact~\ref{remark u_and_r_det_diag}. The set of quantum CMs with such a $\Lambda$ cannot be in a single orbit under the action of $\mathrm{SpO}(2S,\mathbb{R})$. This is because, given $D_0 = \textrm{diag}[d_1,\hdots,d_{2S}]$ such that $\{ \{d_{2i-1},d_{2i} \}   \}_{i=1}^S$ is a valid pairing that is different from $\{ \{ \lambda_i,\lambda_{2S-i+1}  \}    \}_{i=1}^S$, and given any $D \in \cD_\Lambda$, there exists $P \in \textrm{Sym}(S)\setminus \mathrm{SpO}(2S,\mathbb{R})$ such that $D = PD_0 P^T$.

Thus, $\cP_1(S)$ and $\cP_2(S)$ must necessarily each be a subset of the class of eigenspectra that satisfy the general unique pairing condition, but be strictly larger than the class of non-degenerate eigenspectra that satisfy the $(S-1)$-pure unique pairing condition. Our last set of results derived in this paper aim to narrow down the set of candidate eigenspectra that might belong to $\cP_1(S)$ and/or $\cP_2(S)$. In the following proposition statement, a tuple is ``at most $2$-fold degenerate" if there is no subset of three elements that are equal.

\begin{prop}\label{prop}
Consider the eigenspectrum $\Lambda =  \{ \nu_i e^{-2r_i},\nu_i e^{2r_i} \}_{i=1}^S$ of a quantum CM, where the eigenvalues are explicitly associated with thermal and squeezing parameters. 
Assume $\Lambda$ satisfies the general unique pairing condition. Further, assume there exists a $4$-tuple $\{\nu_k e^{-2r_k},\nu_k^{2r_k},\nu_l e^{-2r_l},\nu_l e^{2r_l}\} \subseteq \Lambda$ that has at most $2$-fold degeneracy, and $\nu_k >1$, $\nu_l>1$. Then $\Lambda \notin \cP_1$, $\Lambda \notin \cP_2$.
\end{prop}
\begin{proof}
Since $\cP_1 \subseteq \cP_2$, it suffices to show that $\Lambda \notin \cP_2$. We can assume $k =1$ and $l=2$, and $\nu_1\leq \nu_2$, without loss of generality. Consider the following $D \in \cD_\Lambda$:
\begin{align}
D =  \mathrm{diag} [\nu_1 e^{-2r_1},\nu_1 e^{2r_1},\hdots, \nu_S e^{-2 r_S},\nu_S e^{ 2 r_S}].
\end{align}
We aim to show that there exists a non-trivial ONS transformation of $D$ by some $O \in \mathrm{O}(2S)$, such that $O^T DO$ is a quantum CM and has a set of thermal parameters that is different from $\{ \nu_i \}_{i=1}^S$.

We constrain $O$ to be of the form $O  = O' \oplus I_{2S-4}$, where $O' \in \mathrm{O}(4)$. This implies $O^T DO$ has the form $(O')^T D_4 O' \oplus D_{2S-4}$, where $D_4 = \mathrm{diag} [\nu_1 e^{-2r_1},\nu_1 e^{2r_1},\nu_2 e^{-2 r_2},\nu_2 e^{ 2 r_2}]$ and $D_{2S-4} = \mathrm{diag} [\nu_3 e^{-2r_3},\nu_3 e^{2r_3},\hdots, \nu_S e^{-2 r_S},\nu_S e^{ 2 r_S}]$. Hence, the thermal parameters $\nu_3,\hdots,\nu_S$ are not affected by this transformation. Let us denote the remaining two symplectic eigenvalues of $O^T DO$ by $\nu_1',\nu_2'$. Note that these are the symplectic eigenvalues of $(O')^T D_4 O'$. Thus, we want to show that $O'$ can be chosen such that $\nu_1' \geq 1, \nu_2' \geq 1$, and $\{\nu'_1,\nu'_2  \} \neq \{ \nu_1,\nu_2 \}$.

 We start by writing $D_4$ as $D_4 = \textrm{diag}[\gamma_1,\gamma_2,\gamma_3,\gamma_4]$, so that $\nu_1 = \sqrt{\gamma_1 \gamma_2}$ and $\nu_2 = \sqrt{\gamma_3 \gamma_4}$. Since the $\gamma_i$ are at most $2$-fold degenerate, it follows that there exists a pairing $ \{\{\gamma_{\sigma(1)}, \gamma_{\sigma(2)} \}, \{\gamma_{\sigma(3)}, \gamma_{\sigma(4)} \} \}$, where $\sigma \in \textrm{Sym}(4)$, such that $\sqrt{\gamma_{\sigma(1)} \gamma_{\sigma(2)}} \neq \nu_i$ for $i=1,2$. Since $\Lambda$ satisfies the general unique pairing condition, either $\sqrt{\gamma_{\sigma(1)} \gamma_{\sigma(2)}} <1$ or $\sqrt{\gamma_{\sigma(3)} \gamma_{\sigma(4)}} <1$. We can choose $\sigma$ such that $\sqrt{\gamma_{\sigma(1)} \gamma_{\sigma(2)}} <1$ without loss of generality. Let $P$ denote the permutation matrix corresponding to $\sigma$, that is, $P^T D_4 P = \textrm{diag}[\gamma_{\sigma(1)}, \gamma_{\sigma(2)},\gamma_{\sigma(3)}, \gamma_{\sigma(4)} ] $.

Now, $\nu_1'$ and $\nu_2'$ are given by the absolute values of the eigenvalues of $\Omega_2 (O')^T D_4 O'$, where $\Omega_2 = \Omega_1 \oplus \Omega_1$ \cite{weedbrook:qc2012a}. Consider the map $\tau$ from the cone of $4 \times 4$ PDMs, $\mathbb{P}_{4}(\mathbb{R})$, to the cone $\mathbb{K}$ in $\mathbb{R}^{2}$ consisting of all tuples $(k_1,k_2)$, where $k_2 \geq k_1 > 0$, which takes any $A \in \mathbb{P}_{4}(\mathbb{R})$ to its symplectic eigenvalues arranged in a non-decreasing order. It can be shown that $\tau$ is continuous in the standard topology of both spaces, see, for example, Thm.~7 in \cite{bhatia2015on}. 

Next, note that the subset $\{ O_4^T D_4 O_4 \mid O_4 \in \mathrm{O}(4) \} \subset \mathbb{P}_{4} (\mathbb{R})$ is a path connected region. Thus, for any $O_4 \in \textrm{O}(4) $, there exists a continuous curve $O^T_4(t)D_4 O_4(t) \subset \mathbb{P}_{4} (\mathbb{R}) $ parameterized by $t \in [0,t_f]$, where $O_4(0) = I$ and $O_4 (t_f) = O_4$. This curve produces a continuous curve $( k_1(t),k_2(t) )$ in $\mathbb{K}$ by the continuity of $\tau$. We can choose $O_4 = P$ so that $k_1(t_f) = \sqrt{\gamma_{\sigma(1)} \gamma_{\sigma(2)}} <1$. Since we assumed $k_1(0) = \nu_1>1$ and $k_2(0) = \nu_2>1$, there exists a $t_p \in   [0,t_f]$ such that $k_1(t_p) \geq 1$, $k_2(t_p)\geq 1$ and $k_1(t_p) \neq \nu_i$ for $i =1,2$. Thus, the matrix $O'$ that we are looking for can be taken to be $O' = O_4(t_p)$.

 \end{proof}
It is now easy to show that, when restricted to non-degenerate eigenspectra, we have solved both Prob.~\ref{problem 1} and Prob.~\ref{problem 2}.
 \begin{corollary}\label{cor non-deg-class}
 Consider a non-degenerate eigenspectrum $\Lambda$. Then, the following statements are equivalent:
 \begin{enumerate}
 
 \item   $\Lambda \in \cP_1$
 
 \item  $ \Lambda \in \cP_2$
 
 \item  $\Lambda$ satisfies the $(S-1)$-pure unique pairing condition.
 
 \end{enumerate} 
 \end{corollary}
 \begin{proof}
 We have already noted that $(1) \rightarrow (2) $. $(3) \rightarrow (1)$ follows from Thm.~\ref{thm sub}. Finally, $ (2) \rightarrow (3)$ follows from Prop.~\ref{prop}, which we now proceed to show. 
 
 Assume the opposite. That is, $\Lambda$ does not satisfy the $(S-1)$-pure unique pairing condition. We have already noted (Rem.~\ref{remark Lambda_not_uniqpair}) that if $\Lambda$ does not satisfy the general unique pairing condition, then $\Lambda \notin \cP_2$. So, assume $\Lambda$ satisfies the general unique pairing condition. Then there exist at least two pairs $\{\lambda_i, \lambda_{2S-i+1}\}$ and $\{\lambda_j, \lambda_{2S-j+1}\}$ in the unique valid pairing such that $\lambda_i\lambda_{2S-i+1} >1$, $\lambda_j \lambda_{2S-j+1} >1$. This implies there are at least two thermal parameters $\nu_i,\nu_j $ associated with $\cD_\Lambda$ that satisfy $\nu_i>1,\nu_j>1 $. Thus, $\Lambda$ satisfies the conditions of Prop.~\ref{prop} and, hence $\Lambda \notin \cP_2$.
 \end{proof}

We finish this section by summarizing our results about $\cP_1(S)$ and $\cP_2(S)$ in several bullet points. More specifically, an eigenspectrum $\Lambda $ that satisfies the general unique pairing condition,
\begin{enumerate}
\item belongs to $\cP_1(S)$ and $\cP_2(S)$ if $\Lambda $ is non-degenerate and satisfies the $(S-1)$-pure unique pairing condition (Cor.~\ref{cor non-deg-class}).
\item belongs to $\cP_1(S)$ and $\cP_2(S)$ if at least $2S-1$ of the eigenvalues of $\Lambda$ are equal (Lem.~\ref{lemma 2S-1-deg eigenspectrum}).
\item does not belong to $\cP_1(S)$ and $\cP_2(S)$ if at least two thermal parameters $\nu_k$, $\nu_l$ associated with $\cD_\Lambda$ satisfy $\nu_k>1$, $\nu_l>1$, and the $4$-tuple $\{\nu_k e^{-2r_k},\nu_k^{2r_k},\nu_l e^{-2r_l},\nu_l e^{2r_l}\} \subseteq \Lambda$ has at most $2$-fold degeneracy (Prop.~\ref{prop}). 
\end{enumerate}

\section{Conclusion}\label{sec disc}
 
In this study we attempted to answer two related questions about the eigenspectra of quantum CMs: (1) to characterize the eigenspectra with the property that the corresponding set of quantum CMs form a single orbit under the action of $\mathrm{SpO}(2S,\mathbb{R})$ (Prob.~\ref{problem 1}), and (2) to characterize the eigenspectra with the property that the corresponding set of quantum CMs have the same set of thermal and squeezing parameters (Prob.~\ref{problem 2}). Our main result was solving these problems for the set of non-degenerate eigenspectra (Cor.~\ref{cor non-deg-class}), which constitute 	``almost all" eigenspectra. In particular, we found a non-trivial class of non-degenerate eigenspectra that satisfy both properties (Thm.~\ref{thm sub}). There is only one way to pair the elements of an eigenspectrum in this class such that the product of each pair is greater than or equal to $ 1$. Further, in this pairing, the product of only one of the pairs can be strictly greater than $1$. We named this set of pairing conditions on an eigenspectrum the ``$(S-1)$-pure unique pairing" condition. It is easy to show that the set of eigenspectra satisfying both properties is larger than the class we found, and we gave a class of examples (Lem.~\ref{lemma 2S-1-deg eigenspectrum}). 
 
 Further work is needed to fill the gaps in our findings, namely, to find the conditions on the eigenspectra that violate the non-degeneracy conditions in Thm.~\ref{thm sub}, as well as in Prop.~\ref{prop}, but possess either one or both of the properties in the posed problems. It is likely that the general proof methodology used to derive our results can also be successfully applied to the remaining class of eigenspectra. We conjecture that the complete answers to Probs.~\ref{problem 1} and~\ref{problem 2} are identical.

This study was in part inspired by the observation that, for a Gaussian state, the only features of its CM that that affect its total-photon-number distribution are the eigenvalues. This is a consequence of the fact that, in the phase-space formulation of quantum mechanics, the quasi-probability distribution, e.g. the Wigner function, of the total-photon-number operator is spherically symmetric. More generally, the eigenspectrum of the quantum CM captures the rotation-invariant features of the ``spread" of the Wigner function, while the symplectic eigenvalues capture the amount of spread in the subspaces associated with conjugate coordinates. The latter, for physical states, must be above certain thresholds due to Heisenberg's uncertainty principle. Informally speaking, we can view this study as an investigation of what the rotation-invariant spread of the Wigner function can tell us about the amount of spread as well as its uniformity, in the subspaces associated with conjugate coordinates. Future research directions can include considering much broader classes of eigenspectra of quantum CMs and attempting to determine the ranges in which the corresponding thermal and squeezing parameters must fall. 

\begin{acknowledgments}
The author acknowledges support from the Professional Research Experience Program (PREP) operated jointly by NIST and the University of Colorado. Discussions with Emanuel Knill contributed to thinking about the problem and to improving the formulation of the main statements. I would also like to thank Scott Glancy for helpful questions and suggestions to improve the presentation. The current version of the manuscript was significantly influenced by the helpful comments of the reviewer, for which I am very grateful.
\end{acknowledgments}

 \bibliography{covchar_rmp_non_deg_ver2.bib}

 \appendix

\section{Concerning the CMs of Pure Gaussian States}\label{app}

Here we show that if $\Lambda$ is an eigenspectrum of the quantum CM of a pure Gaussian state, then $\Lambda \in \cP_1$. This in turn implies that $\Lambda \in \cP_2$. The strategy used in the derivation is different from the approach of the main text that culminated in the proof of Thm.~\ref{thm sub}. We first highlight the following facts, using the concepts introduced in Sec.~\ref{sec prob_form}.
Assume $\Gamma$ is the quantum CM of a pure Gaussian state living in $S$ modes. Then, the thermal parameters $\nu_i$ of $\Gamma$ satisfy $\nu_i=1$ for all $i = 1,\hdots, S$. It follows that $\Gamma = A^TA$ for some symplectic matrix $A$. Thus, $\det(\Gamma) = 1$ since $\det(A)=1$. Writing $A = LQK$, where $K,L \in \mathrm{SpO}(2S,\mathbb{R})$ and $Q = \textrm{diag}[e^{-r_1},e^{r_1},\hdots, e^{-r_S},e^{r_S}]$, we obtain $ \Gamma = K^T Q^2 K$. 
\begin{lemma}\label{lem app}
Let $\Gamma_1$ and $\Gamma_2$ be two quantum CMs with the same eigenspectrum $\Lambda$. Assume $\Gamma_1$ and $\Gamma_2$ are the quantum CMs of pure Gaussian states living in $S$ modes. Then $\Gamma_1 = W^T\Gamma_2 W$ for some $W \in \mathrm{SpO}(2S,\mathbb{R})$.
\end{lemma}
\begin{proof}
We want to show that, given a quantum CM $\Gamma$ with the eigenspectrum $\Lambda$, any non-trivial ONS transformation $O^T \Gamma O$ by an ONS matrix $O$ results in a PDM that is not a quantum CM. Since $\Gamma' := O^T \Gamma O =O^T  K^T Q^2 K O$ is a non-trivial ONS transformation of $\Gamma$, there does not exist a $W' \in \mathrm{SpO}(2S,\mathbb{R})$ such that $\Gamma' = (W')^T Q^2 W'$. In other words, $\Gamma'$ is not a quantum CM of a pure Gaussian state. Hence, $\Gamma'$ has a decomposition $\Gamma' =  (A')^T T A'$, where $T \neq I_{2S}$ and $A'$ is a symplectic matrix. We assume $\Gamma'$ is a quantum CM and show that it leads to a contradiction. 

If $\Gamma'$ is a quantum CM, then $T>I_{2S}$ since we have excluded the case $T=I_{2S}$. Thus, 
\begin{equation}
\det(\Gamma') = \det((A')^T T A') = \det((A')^T  A') \det(T) = \det(T) >1.
\end{equation} 
But we also have
\begin{equation}
\det(\Gamma') = \det( O^T \Gamma O) = \det(O^T  O) \det(\Gamma) = \det(\Gamma) =1.
\end{equation} 
Thus, $\Gamma'$ is not a quantum CM.
\end{proof}

\end{document}